\newcommand\gt{\mathsf{gt}} 
\newcommand\Hom{\mathsf{Hom}} 
\tikzset{vertex/.append style={inner sep=1.5pt}} 
\newcommand\flag{\,^<\!\!|} 
\newcommand\paral\parr 
\newcommand\deux{\mathbb{2}} 
\newcommand\un{\mathbb{1}} 
\newcommand\unplusun{{\un\oplus\un}}
\newcommand\rr{\mathsf{r}}
\newcommand\id{\mathrm{Id}} 
\newcommand\dicog{\mathsf{dicog}} 
\newcommand\comsq[1]{\textsc{com.sq}($#1$)}
\newtheorem{prop}{Proposition} 
\newtheorem{thm}[prop]{Theorem} 
\newtheorem{dfn}[prop]{Definition} 
\newcommand\bef{\mathbin{\triangleleft}} 
\newcommand\ts{\mathbin{\otimes}} 
\newcommand\pa{\mathbin{\bindnasrepma} } 
\newcommand\seq\vdash 
\renewcommand\ae{\textsc{ae}} 
\newcommand{\doi}[1]{\textsc{doi}: \href{http://dx.doi.org/#1}{\nolinkurl{#1}}}
\title{Flag: a Self-Dual Modality\\ for Non-Commutative Contraction 
and Duplication\\ in the Category of Coherence Spaces} 
\author{Christian Retor\'e \institute{LIRMM, Univ Montpellier \& CNRS } }
\renewcommand\sp{\ifmmode{\scriptscriptstyle SP}\else{\sc sp}\fi} 
\newcommand\tablecoh[5]
\newcommand\edge{\!\!\--\!\!} 
\begin{document} 

\maketitle 

%
%
%
%
%
%

\begin{flushright} \footnotesize \it 
This paper is dedicated to the memory of my parents,\\ both math teachers, both  deceased in 2021. 
\end{flushright} 

\begin{abstract}
After reminding what coherences spaces are and how they interpret linear logic, we define a modality ``flag" in the category of coherence spaces (or hypercoherences) with  two inverse linear (iso)morphisms:  ``duplication"  from (flag A) to 
((flag A) $\bef$ (flag A)) and ``contraction" in the opposite direction --- where $\bef$    is the self dual and non-commutative  connective  known as ``before" in pomset logic and known as ``seq(ential)" in the deep inference system (S)BV. 
In addition, as expected, the  coherence space A is a retract of its modal image (flag A). 
This suggests an intuitive interpretation of (flag A) as  ``repeatedly A" or as ``A at any instant" when  ``before"  is given a temporal interpretation. 
We hope the semantic construction of flag(A)  will help to design proof rules for ``flag" and we briefly discuss this at the end of the paper. 
\end{abstract}


\section{Presentation} 
Given the aftermath  \cite{gug2017lyon,Slavnov2019,nguyen2019proof,Retore2021Lambek}  of pomset logic \cite{Ret93,Ret97tlca} 
and of the related BV calculus of structures 
\cite{Gug99,GugStr01,GugStrass2011tcl}
we decided to publish this  ancient work  of ours \cite{Ret94mod}, which answers 
a question of Jean-Yves Girard in \cite{Gir91}. This semantic work might help to find the proper rules for a self dual modality as recently sketched by Alessio Guglielmi in \cite{gug2017lyon}. 

The structural rules of classical logic are responsible for the non-determinism of classical logic, and linear logic which carefully handles these rules is especially adequate for a constructive treatment of classical logic, as \cite{Gir91} shows.  Linear logic handles structural rules by the modalities (a.k.a. exponentials)
``?" and ``!". The modality ``?" allows contraction and weakening in positive position, and the modality ``!" in negative position. The formula $!A$ linearly implies $A$ while the formula $?A$ is linearly implied by $A$. 
In semantical words, this means we have the linear morphisms:

$$ 
\begin{array}{rclcrcl} 
!A&\multimap& (!A\otimes !A)&\qquad& ?A & \leftspoon  & ?A \paral ?A  \\
\downspoon &\sespoon &   &\qquad& \upspoon  & \nwspoon  & \\ 
1 & & A &\qquad& 1 &  & A \\ 
\end{array}
$$ 

The major difficulty when dealing with classical logic are the cross-cuts, appearing in the cut elimination theorem of Gentzen as a rule called MIX \cite{Gen34} . This rule is a generalised cut between several occurrences of $A$ and several occurrences of $\neg A$, i.e.   a cut between two formulas both coming from contractions. This is a major cause of non-determinism: an example can be found in \cite[Appendix B, Example 2, p. 294]{Gir91}. 
In linear logic, such a cut may not happen, since contraction only applies to $?A$ formulas, while their negation  is $!A^{\perp}$ which can not come from a contraction --- hence no such CUT is possible in linear logic.

Let us now quote the precise paragraph of \cite[p. 257]{Gir91} which initially motivated our work on a self-dual modality for contraction: 
\begin{quote} \it The obvious candidate for a classical semantics was of course coherence spaces which had already given birth to linear logic; the main reason for choosing them was the presence of the involutive linear negation. However the difficulty with classical logic is to accommodate structural rules (weakening and contraction); in linear logic, this is possible by considering
coherent spaces $?X$. But since classical logic allows contraction and weakening both on a formula and its negation, the solution seemed to require the linear negation of $?X$ to be of the form $?Y$, which is a nonsense (the negation of $?X$ is $!X^{\perp}$ which is by no means of isomorphic to a space $?Y$).
Attempts to find a self-dual variant $\S Y$ of $?Y$ (enjoying $(\S Y)^{\perp}=\S Y^{\perp}$ ) systematically failed. The semantical study of classical logic stumbled on this problem of self-duality for years.
\end{quote} 

Here as well we focus on coherence spaces because of their tight relation to linear logic \cite{Gir87,Tro92,Ret94rc,Ret97,Girard2011blindspot}. Once the modality is found in the category of coherence spaces, and shown to enjoy the expected properties, we briefly show that this modality also exists in the category of hypercoherences of  Thomas Ehrhard \cite{Ehr93}, and discuss the possibility to conceive syntactic rules for this modality. 

In our previous work on pomset logic \cite{Ret93,Ret97tlca,Retore2021Lambek}, we studied a self-dual and non-commutative  connective called ``before", written 
$\bef$ together with partially ordered multisets of formulae. This lead us to the modality $\flag$  (flag) to be described in this paper. Flag is an  endofunctor of the category of coherence spaces equipped with linear maps, flag is self-dual, and flag enjoys both \emph{left and right contraction} with respect to ``before"  as a pair of linear isomorphisms 
$\flag  A \leftspoon\!\rightspoon (\flag A\bef \flag  A)$
and $A$ is a retract of
$\flag  A$ . Fortunately, there is no weakening, which would yield unwanted morphisms like  $1 \multimap A$ and $A \multimap 1$ for all $A$.

So far, there is no syntax extending pomset logic to this modality, although Alessio Guglielmi sketched a possible syntax in the closely related calculus of structures in \cite{gug2017lyon}. In order to define a syntax, we should first study the basic steps of cut-elimination, in particular the contraction/contraction case and the commutative diagrams it requires, as Myriam Quatrini did in \cite{Qua95} for the logical calculus LC of \cite{Gir91}. We briefly discuss the guidelines for design of the rules for flag at the end of the paper.

\section{Denotational Semantics of Linear Logic with ``Before" Connective}
\label{coh}

This section is taken over from  \cite{Retore2021Lambek} and slightly adapted to the presentation of the flag modality. It is just a short presentation of coherences spaces and of the interpretation of linear logic in coherence spaces in order to provide a reasonably self-contained paper. 

\begin{dfn}[categorial interpretation] \label{catinterpret} 
Denotational semantics or categorical interpretation of a logic is the interpretation in a category, 
here called $\mathbf{C}$: 
\begin{itemize} 
\item formulas are interpreted as objects in $\mathbf{C}$
\begin{itemize} 
\item an object $\llbracket a \rrbracket$ is arbitrarily associated with any atomic formula $a$ (or propositional variable) 
\item an $n$-ary logical connective (usually $n=2$ or $n=1$)  $T[A_1,\ldots,A_n]$ is interpreted as the contruction of  $\llbracket T[A_1,\ldots,A_n] \rrbracket$ from the $n$ objects $\llbracket A_1\rrbracket$ \ldots $\llbracket A_n\rrbracket$.
\end{itemize} 
\item proofs are interpreted as morphisms of $\mathbf{C}$ that are preserved under normalisation: 
\begin{itemize} 
\item a proof $d$ of $A\seq B$  is interpreted as a morphism $\llbracket d \rrbracket $ from the object $\llbracket A \rrbracket$  to the object $\llbracket B \rrbracket$. There might as well be morphisms from  $\llbracket A \rrbracket$  to  $\llbracket B \rrbracket$ that are not the interpretation of a proof of $A\seq B$ unless the interpretation is proved to be ``fully abstract" (some also say  ``total"). 
\item whenever a poof $d$ of $A\seq B$ reduces to a proof $d'$ (hence a proof of $A\seq B$ as well) by  (the transitive closure) of $\beta$-reduction  or cut-elimination one has $\llbracket d \rrbracket= \llbracket d' \rrbracket $ (hence the name \emph{denotational} semantics). 
\item when there is a terminal object $\un$, a semantic entity in the object (formula)  $X$ of $\mathbf{C}$ can be viewed as a morphism from $\un$ to $X$. Semantic entities include the interpretations of the proofs of $\seq X$ but there might be other semantic entities in $X$, that are not the interpretation of any proof of $\seq X$ when the interpretation is not fully abstract. 
\end{itemize} 
\end{itemize} 
\end{dfn} 
%
%
%
%

Categorical interpretations of  intuitionistic logic
take places in Cartesian closed categories while categorical interpretations of  linear logic  take place in a monoidal closed category --- with  monads for interpreting the modalities (also called exponentials) of full  linear logic.

\subsection{Coherence Spaces}
\label{cohDef}

The category of coherence spaces is a concrete category: objects are (countable) sets endowed with a binary relation, and morphisms are linear maps.  It interprets the proofs up to cut-elimination (or $\beta$ reduction). Coherence spaces are tightly related to linear logic: indeed, linear logic arose from this particular semantics, initially invented to model second order lambda calculus i.e. quantified propositional intuitionistic logic \cite{Gir86}. Coherence spaces are themselves inspired from the categorical work on ordinals by Jean-Yves Girard; they are the binary qualitative domains. 

\begin{dfn}[coherence space] \label{cohSpace} 
A coherence space $A$ is a set $|A|$ (possibly infinite) called the \emph{web} of $A$  whose elements are called \emph{tokens}, endowed with a binary reflexive and symmetric relation called \emph{coherence} on $|A|\times |A|$ written  $\alpha\coh \alpha'[A]$ or  simply $\alpha\coh \alpha'$ when $A$ is clear. 
\end{dfn}   

The following notations are common and useful: 
\begin{verse} 
$\alpha\scoh \alpha'[A]$  iff  $\alpha\coh \alpha'[A]$ and $\alpha\neq \alpha'$

$\alpha\incoh \alpha'[A]$  iff  $\alpha\not\coh \alpha'[A]$ or $\alpha=\alpha'$

$\alpha\sincoh \alpha'[A]$  iff  $\alpha\not\coh \alpha'[A]$ and $\alpha\neq \alpha'$
\end{verse}

A proof of $A$ is to be interpreted by a \emph{clique}  of  the corresponding coherence spaces $A$, a \emph{clique} being a set of pairwise coherent tokens in $|A|$ --- we write $x\in A$ for $x\subset |A|$ and for all $\alpha,\alpha'$ in $x$ one has $\alpha\coh \alpha'$. Observe that for all $x\in A$, if $x'\subset x$ then  $x'\in A$. 

\begin{dfn}[linear morphism]  
A linear  morphism $F$ from $A$ to $B$ is a morphism mapping cliques of $A$ to cliques of $B$ such that: 
\begin{itemize} 
\item For all  $x\in A$ if  $(x'\subset x)$ then  $F(x')\subset F(x)$
\item For every family $(x_i)_{i\in I}$ of pairwise compatible cliques --- that is to say    $(x_i\cup x_j)\in A$ holds for all $i,j\in I$ --- 
 $F(\cup_{i\in I} x_i)= \cup_{i\in I} F(x_i)$.\footnote{The morphism is said to be \emph{stable} when this second condition is replaced with $F(\cup_{i\in I} x_i)= \cup_{i\in I} F(x_i)$ holds more generally for the union of a directed family of cliques of $A$, i.e. $\forall i,j\exists k\ (x_i\cup x_j)=x_k$.}  
\item For all $x,x'\in A$ if $(x\cup x')\in A$ then $F(x\cap x')=F(x)\cap F(x')$ -- this last condition is called \emph{stability}.  
\end{itemize} 
Linear morphisms are maps in the set theoretic sense and they compose as maps. 
\end{dfn}   

Due to the removal of structural rules,  linear logic has two kinds of conjunction: 

$$\begin{prooftree} 
\seq \Gamma, A\quad \seq \Delta, B 
\justifies \seq \Gamma,\Delta, A \ts B
\using \ts
\end{prooftree}
\qquad 
\begin{prooftree} 
\seq \Gamma, A\quad \seq \Gamma, B 
\justifies \seq \Gamma, A \& B
\using \& 
\end{prooftree}
$$ 

Those two rules are equivalent when  contraction and weakening are allowed. 
Multiplicative conjunction splits contexts  as in the $\ts$ rule above while the additive conjunction duplicates context as in the $\&$ rule above.  Regarding  denotational semantics, the web of the coherence space associated with a formula $A * B$ with $*$ a binary \emph{multiplicative} connective  is  the Cartesian \emph{product} of the webs of  $A$ and $B$ i.e. $|A*B|=|A|\times |B|$  --- the web of a binary \emph{additive} connective $A \square B$ is the  \emph{disjoint union} (also called \emph{sum})  of the webs of  $A$ and $B$ i.e.  $|A\square B|=|A|\uplus |B|$. 

\begin{dfn} 
Negation is a unary connective which is both multiplicative and additive:\footnote{Indeed, its web is both the Cartesian product of the web of $A$ and the disjoint union of the web of $A$.} 

\centerline{$|A^\perp|=|A|$ and $\alpha\coh '\alpha[A^\perp]$ iff $\alpha\incoh \alpha'[A]$} 
\end{dfn} 

One may wonder how many binary multiplicative connectives there are,
i.e. how many different coherence relations one may define on $|A|\times |B|$ from the coherence relations on $A$ and on $B$. 

We can limit ourselves to the connectives $*$  that are covariant functors in both $A$ and $B$, i.e. the ones such that 
\begin{itemize} 
\item $\alpha \coh \alpha' [A]$ and $\beta \coh \beta'[B]$ entails $(\alpha,\beta)\coh(\alpha',\beta')[A*B]$ 
\item $\alpha \incoh \alpha' [A]$ and $\beta \incoh \beta'[B]$ entails $(\alpha,\beta)\incoh(\alpha',\beta')[A*B]$ 
\end{itemize} 
Indeed,  there is a negation, hence a contravariant connective in $A$ is a covariant connective in $A^\perp$. Hence when both components are in the $\incoh$ relationship so are the two couples, 
and when they are both coherent, so are the two couples. If a connective $A\star B$ is not like this then $A^\perp\star B$ or $A\star B^\perp$ or $A^\perp\star B^\perp$ is as we wish. 

As it is easily observed, given two tokens $\alpha,\alpha'$ in a coherence space $C$
exactly one of the three following properties hold: 

$$\alpha \sincoh \alpha' \qquad \alpha = \alpha' \qquad \alpha \scoh \alpha'$$  

In order to define a multiplicative connective, 
one should simply say when $(\alpha,\beta)\coh(\alpha',\beta')[A*B]$ holds depending on whether  $\alpha\coh \alpha'[A]$ and $\beta\coh \beta'[B]$ hold. 
Thus defining a binary multiplicative connective is to fill a nine cell table as the ones below, the first column indicates the relation between $\alpha$ and $\alpha'$ in $A$, while the first row indicates  the relation between $\beta$ and $\beta'$ in $B$. 

However if $*$ is assumed to be covariant in both its arguments, seven out of the nine cells are filled, so the only free values  are the ones in the right upper cell, NE=North-East,  and the one in the left bottom cell SW=South-West.  They cannot be $=$ so it makes four possibilities. 

\renewcommand\ne{{\scriptstyle\mathrm{NE}}}
\newcommand\sw{{\scriptstyle\mathrm{SW}}}

$$ 
\tablecoh{A}{B}{*}{\ne?}{\sw?}
$$

If one wants $*$ to be commutative, there are only two possibilities, namely $NE=SW=\scoh$ ($\pa$)  and $NE=SW=\sincoh$ ($\ts$).

$$\tablecoh{A}{B}{\pa}{\scoh}{\scoh} 
\hfill 
\mbox{\qquad and \qquad} 
\hfill 
\tablecoh{A}{B}{\ts}{\sincoh}{\sincoh} 
$$

However if we do not ask for the connective $*$ to be commutative we have a third \emph{non-commutative} connective $A\bef B$ and a fourth connective $A\triangleright B$ which is simply $B\bef A$.  

$$
\tablecoh{A}{B}{\bef}{\sincoh}{\scoh} 
\mbox{\qquad and \qquad} 
\tablecoh{A}{B}{\triangleright}{\scoh}{\sincoh} 
$$ 

\begin{dfn}[multiplicative connectives] 
The coherence relation for the multiplicative  connectives are defined as follows: 

$(\alpha,\beta)\coh(\alpha',\beta')[A\ts B]$ iff $\alpha\coh \alpha'[A]$ and $\beta\coh \beta'[B]$

$(\alpha,\beta)\coh(\alpha',\beta')[A\pa B]$ iff $\alpha\coh \alpha'[A]$ or  $\beta\coh \beta'[B]$

$(\alpha,\beta)\coh(\alpha',\beta')[A\bef B]$ iff $\alpha\scoh \alpha'[A]$ or  ($\alpha=\alpha'$ and $\beta\coh \beta'[B]$) 
\end{dfn}

\begin{dfn}[trace of a linear morphism]
The linear morphisms from $A$ to $B$ written $\Hom(A,B)$ can be represented by the coherence space  $A\multimap B=A^\perp\pa B$ \footnote{This internalisation of $\Hom(A,B)$ makes the category monoidal closed, but not Cartesian closed because the associated conjunction, namely $\ts$ is not a product.}

$$ 
\begin{array}{l||c|c|c|}
A \multimap B & \sincoh & = & \scoh \\ \hline \hline 
\sincoh & \scoh & \scoh & \scoh  \\ \hline 
= & \sincoh & = & \scoh \\ \hline 
\scoh & \sincoh  & \sincoh & \scoh \\ \hline 
\end{array} 
$$ 

Linear morphisms are in a one-to-one correspondence with cliques of $A\multimap B$.  Given a clique $f\in (A\multimap B)$ the map $F_f$ from cliques of $A$ to cliques of $B$ defined by $F_f(x)=\{\beta\in |B|\ |\ \exists \alpha\in x\ \mathrm{such\ that}\ (\alpha,\beta)\in f\}$ is a linear morphism. Conversely, given a linear morphism $F$, the set $\{(\alpha,\beta)\in |A|\times |B|\ |\ \beta\in F(\{\alpha\})\}$ called the \emph{trace} of $F$  is a clique of $A\multimap B$. 
\end{dfn}

\begin{prop}
The connective $\bef$ is (1) non commutative, (2) associative, (3) self-dual and (4) lays in between $\otimes$ and $\pa$. 

\end{prop}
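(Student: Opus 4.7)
The plan is to verify each of the four properties by straightforward unfolding of the coherence table defining $\bef$, since all notions reduce to pointwise conditions on pairs of tokens.

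For (1) non-commutativity, it suffices to exhibit a single pair of coherence spaces where the two coherence relations disagree. I would take $A=B$ to be the space with two tokens $\{\alpha,\alpha'\}$ in strict coherence ($\alpha\scoh\alpha'$). Then in $A\bef B$ the pair $(\alpha,\alpha')$ and $(\alpha',\alpha)$ satisfy $\alpha\scoh\alpha'[A]$, so they are strictly coherent; but in $B\bef A$ the first coordinates are $\alpha$ and $\alpha'$ with the same condition, so the same pair is also strictly coherent — so I must pick a sharper example. A cleaner one: take $\alpha\neq\alpha'$ with $\alpha\scoh\alpha'[A]$ and $\beta\sincoh\beta'[B]$. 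Then in $A\bef B$ one has $(\alpha,\beta)\scoh(\alpha',\beta')$ because $\alpha\scoh\alpha'[A]$, while in $B\bef A$ one has $(\beta,\alpha)$ vs $(\beta',\alpha')$ whose first components satisfy $\beta\sincoh\beta'[B]$, hence they are strictly incoherent.

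For (2) associativity, I would unfold the definition twice on each side and check that both $(A\bef B)\bef C$ and $A\bef(B\bef C)$ induce the same coherence on triples. A short computation shows that in both cases $(\alpha,\beta,\gamma)\coh(\alpha',\beta',\gamma')$ holds iff one of the three disjoint cases occurs: $\alpha\scoh\alpha'[A]$, or $\alpha=\alpha'$ and $\beta\scoh\beta'[B]$, or $\alpha=\alpha'$ and $\beta=\beta'$ and $\gamma\coh\gamma'[C]$. This is the lexicographic coherence on triples, so the two iterated connectives agree up to the canonical reassociation of pairs.

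For (3) self-duality, I would compute the coherence of $(A\bef B)^\perp$ by taking the complement of the defining relation. Using the trichotomy $\sincoh/=/\scoh$, the negation of ``$\alpha\scoh\alpha'$ or ($\alpha=\alpha'$ and $\beta\coh\beta'$), together with $(\alpha,\beta)\neq(\alpha',\beta')$,'' simplifies to ``$\alpha\sincoh\alpha'[A]$, or $\alpha=\alpha'$ and $\beta\sincoh\beta'[B]$.'' On the other hand, the definition applied to $A^\perp\bef B^\perp$ gives ``$\alpha\scoh\alpha'[A^\perp]$, or $\alpha=\alpha'$ and $\beta\coh\beta'[B^\perp]$,'' and translating each via the definition of negation yields exactly the same condition. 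Hence the identity on tokens is an isomorphism $(A\bef B)^\perp\cong A^\perp\bef B^\perp$.

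For (4), I would show that the identity on $|A|\times|B|$ induces both a linear morphism $A\ts B\multimap A\bef B$ and a linear morphism $A\bef B\multimap A\pa B$; equivalently that $\coh[A\ts B]\subseteq\coh[A\bef B]\subseteq\coh[A\pa B]$. A glance at the nine-cell tables suffices: the only free cells are NE and SW, valued $(\sincoh,\sincoh)$ for $\ts$, $(\sincoh,\scoh)$ for $\bef$, and $(\scoh,\scoh)$ for $\pa$, so the pointwise strengthening of $\coh$ decreases along this chain. The main (mild) care to take is checking that the identity trace really defines a linear morphism in the sense of the earlier definition, but this is immediate since the identity preserves unions and intersections of cliques.
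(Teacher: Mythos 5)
Your proposal is correct and, for items (2)--(4), follows essentially the same route as the paper: the reassociation bijection realising the lexicographic coherence on triples for associativity, the complementation of the defining condition (equivalently, the paper's exclusive trichotomy on $\alpha\scoh\alpha'$ / $\alpha=\alpha'$ / $\alpha\sincoh\alpha'$) for self-duality, and the identity trace together with the inclusion of strict coherences $\scoh[A\ts B]\subseteq\scoh[A\bef B]\subseteq\scoh[A\pa B]$ for (4). The one place you genuinely diverge is (1): the paper merely observes that there is no canonical isomorphism between $A\bef B$ and $B\bef A$, whereas you exhibit a concrete pair of tokens on which the coordinate swap fails to preserve coherence; to turn this into a full proof that \emph{no} isomorphism exists, instantiate your example with two-token spaces ($\alpha\scoh\alpha'[A]$, $\beta\sincoh\beta'[B]$) and note that $A\bef B$ then has four strict-coherence edges (a $4$-cycle) while $B\bef A$ has only two (a perfect matching), so the two spaces are not isomorphic under any bijection of webs --- a sharper conclusion than the paper's.
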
 

\begin{proof} 
\begin{enumerate} 
\item $\bef$ is non commutative. 
From the definition of the coherence space $A\bef B$ it is clear that there is no canonical isomorphism between $A\bef B$ and $B\bef A$, hence   $A\bef B$ and $B\bef A$ are not isomorphic in general. 
\item $\bef$ is associative. 
The set $\{((\alpha,(\beta,\gamma)),((\alpha,\beta),\gamma))\ |\ \alpha\in |A|,\beta\in |B|, \gamma\in |C|\}$  defines a linear isomorphism from $A\bef (B\bef C)$ to 
$(A\bef B)\bef C$.  
\item $\bef$ is self-dual, $(A\bef B)^\perp\equiv (A^\perp)\bef (B^\perp)$. 
Given two different tokens $(\alpha,\beta)$ and $(\alpha',\beta')$ in $|A|\times|B|$, observe that: 
\begin{enumerate} 
\item 
$(\alpha,\beta)\scoh(\alpha',\beta')[A\bef B]$ means 
($\alpha=\alpha'$ and $\beta\scoh\beta'[B]$)   or $\alpha\scoh\alpha'[A]$
\item 
$(\alpha,\beta)\scoh(\alpha',\beta')[A^\perp\bef B^\perp]$ means 
($\alpha=\alpha'$ and $\beta\sincoh\beta'[B]$)   or $\alpha\sincoh\alpha'[A]$
\end{enumerate} 

Given that those two tokens $(\alpha,\beta)$ and $(\alpha',\beta')$ are different, either: 
\begin{itemize} 
\item 
If $\alpha\neq\alpha'$ then either 
\begin{verse} 
$\alpha\scoh\alpha'[A]$,  1 holds and 2 does not hold or 

$\alpha\sincoh\alpha'[A]$,  2 holds and 1 does not hold. 
\end{verse} 
\item If $\alpha=\alpha'$ then $\beta\neq\beta'$ and 
either 
\begin{verse} 
$\beta\scoh\beta'[B]$, 1 holds and 2 does not hold 
or 

$\beta\sincoh\beta'[B]$ 2 holds and 1 does not hold. 
\end{verse} 
\end{itemize} 
Consequently, any two different tokens $(\alpha,\beta)$ and $(\alpha',\beta')$ in $|A|\times |B|$ are either strictly coherent in $(A\bef B)$ or are strictly coherent in $A^\perp\bef B^\perp$. Therefore 
$(A\bef B)^\perp\equiv (A^\perp)\bef (B^\perp)$. 

\item $(A \otimes B) \multimap  (A\bef B) \multimap  (A\pa B)$
The set $\{((\alpha,\beta),(\alpha,\beta))\ |\ \alpha\in |A|, \beta\in |B|\}$
defines a linear morphism from $A\ts B$ to $A\bef B$ and the very same set of pairs of tokens also defines a linear morphism from $A\bef B$ to $A\pa B$. 
\end{enumerate} 
\end{proof} 

The definition of the coherence spaces associated with $A\bef B$ and $A\pa B$ can be generalised to series parallel  (\sp) partial orders of formulas, a well-known class of  finite partial orders defined as follows. Given two partial orders $O_1\subset |O_1|^2$ and $O_2\subset |O_2|^2$ on two disjoint sets  $|O_1|$ and $|O_2|$, one can defined two partial orders on 
$|O_1|\uplus|O_2|$: 

\begin{itemize} 
\item 
$O_1 \pa O_2$  the parallel composition of  $O_1$ and $O_2$ defined by $O_1\uplus O_2 \subset (|O_1|\uplus|O_2|)^2$ 
is the disjoint union of the two partial orders $O_1$ and $O_2$ viewed as two sets of couples 
\item 
 $O_1\bef O_2$ the series composition   of  $O_1$ and $O_2$  
 defined by $O_1\uplus O_2  \uplus (|O_1|\times|O_2|) \subset (|O_1|\uplus|O_2|)^2$
\end{itemize} 

The class of series-parallel partial orders is the the smallest class of binary relations defined from the one element orders and closed by two binary operations $\pa$ and $\bef$  on partial orders defined above. 

\begin{dfn}[sp-order of coherence spaces]\label{spcoh}  
Given  an \sp-order $O$ on $\{1,\ldots,n\}$ and a multiset of formulas $\{A_1,\ldots,A_n\}$   (one may have $A_i=A_j$ as formulas for $i\neq j$) one may consider  the formula $O(A_1,\ldots,A_n)$ defined from $A_1,\ldots,A_n$ using only  the  $\pa$ and $\bef$ connectives, following the order $O$: series composition in $O$ corresponds to the $\bef$ connective  in $O(A_1,\ldots,A_n)$
and parallel composition in $O$ corresponds to the $\pa$ connective  in $O(A_1,\ldots,A_n)$. 
One can define directly the coherence space corresponding to the formula $O(A_1,\ldots,A_n)$ as follows: 
\begin{quote} 
\noindent web: $|A_1|\times \cdots \times |A_n|$ 

\noindent 
strict coherence: $(\alpha_1,\ldots, \alpha_n)\scoh (\alpha'_1,\ldots, \alpha'_n)[O[A_1,\ldots,A_n]]$\\ 
whenever there exists $i$ such that  $\alpha_i\scoh \alpha'_i$
and for every  $j<_O i$  one has 
$\alpha_j = \alpha'_j$. 
\end{quote} 
\end{dfn} 

Coming back to what a categorical interpretation is, let us define precisely the category where our categorical interpretation takes place: 

\begin{dfn}[The category $\mathbf{Coh}$]\label{Coh} 
The category of coherence spaces is defined by its objects which are coherence spaces and its linear morphisms which are linear maps. It contains 
constructions corresponding to negation and to the connectives $\pa,\ts,\bef$. 
Linear maps from $A$ to $B$ can be internalised as their trace, that are object of the coherence space $A \multimap B$. 
\end{dfn}

Linear logic is issued from coherence semantics,  
and consequently coherence semantics is close to linear logic syntax. Coherence spaces may even be turned into a fully abstract (a.k.a. total) model in the multiplicative case (without before), see \cite{Loader94}. 
The before connective and pomset logic are also issued from coherence semantics, so let us define precisely the categorical interpretation of pomset logic, the logic that the flag modality is extending.

\subsection{A Sound and Faithful Interpretation of Pomset Proof-Nets in Coherence Spaces}
\label{cohIntPN}

We cannot present here pomset proof-nets in full detail, but they are thoroughly presented in the recent paper \cite{Retore2021Lambek}. Nevertheless we can briefly present handsome proof-nets for pomset logic. This view of proof-nets was initially introduced in \cite{Ret96entcs} for MLL,  and in \cite{Ret98roma} for pomset logic. 

De Morgan laws hold for pomset logic:

$$
\begin{array}{rcl} 
(A^\perp)^\perp&=&A\\ 
(A\pa B)^\perp&=&(A^\perp\ts B^\perp)\\ 
(A\bef B)^\perp&=&(A^\perp\bef B^\perp)\\ 
(A\ts B)^\perp&=&(A^\perp\pa B^\perp)\\ 
\end{array}
$$ 

Because of the De Morgan laws above, as in classical logic,  
every formula does have an equivalent negation normal form (in which negation only appears on atoms), 
and proof-net syntax only use negation normal forms. 
The set of pomset formulas in negation normal form $F$ over a set of propositional variables $P$ is defined as follows: 

$$F {}{:}{=} P \ |\ P^\perp \ |\ F \pa F\ |\ F \bef F\ |\ F \ts F$$

The  \emph{directed cograph} or \emph{dicograph} for short $\dicog(F)$ associated with a formula $F$ is defined as follows. Given a formula $F$ we say that two occurrences of atoms $a$ and $b$ (elements of $P\uplus P^\perp$) in  $F$ meet on the connective $*$ ($*$ being one of $\pa,\bef,\ts$) whenever $F$ is $H[G[a]*G[b]]$ where $E[]$ denotes a formula with a  hole --- beware that the order matters, $a$ and $b$ meet on $*$ does not mean that $b$ and $a$ meet on $*$, one connective being non commutative. The arcs of $\dicog(F)$ are defined as follows: 
\begin{itemize} 
\item 
there is an $R$ (for Red or Regular, when printed in black and white) arc from $a$ to $b$ and no $R$ arc from $b$ to $a$ whenever  $a$ and $b$ meet on a ``before", 
\item 
there is an $R$ edge (that is a pair of opposite $R$ arcs) between $a$ and $b$ whenever   $a$ and $b$ meet on a tensor,
\item there is  no $R$ arc from $a$ to $b$ nor from $b$ to $a$ when $a$ and $b$ meet on a par
\end{itemize}

Several distinct formulas may describe the same dicograph: in this case the formulas only differ up to the associativity and commutativity of $\pa,\ts$ and to the associativity of $\bef$, e.g. $(A\otimes (B\bef (C\bef D)))\otimes E$ and 
$(E\otimes A) \otimes ((B\bef C)\bef D)$ define the same dicograph. Dicographs are directed graphs that are the dicograph of some  formula. Cographs are dicographs with no directed edges, i.e. the dicograph corresponding to a formula without any $\bef$ connective. Series-parallel partial orders (already mentioned in subsection \ref{cohDef})  are the dicographs with no undirected edges, corresponding to formulas with no $\otimes$ connectives. For more details on those structures and their characterisation by absence of certain subgraphs see \cite{BGR97} or  the more recent book chapters \cite{Retore2021Lambek,GuoSurmacs2018dicographs}

A \emph{pomset proof structure} with conclusion $F$ is an edge bicoloured graph (Red or Regular arcs and Blue or Bold edges) overs the occurrences of atoms of $F$ consists in the  $R$ dicograph $\dicog(F)$ and a set of $B$ edges satisfying: no two $B$ edges are adjacent, every atom (vertex) of $F$ is incident to a (unique) $B$ edge, and the end vertices of $B$ edges are dual atoms $a$ and $a^\perp$. Notice that the conclusion of a proof structure contains the same number of occurrences of $a$ and of occurrences of $a^\perp$ for each propositional variable $a$. The conclusion  $F$ of a proof structure $\pi$ way be written in several ways as $G[C_1,\ldots,C_n]$ with $G$ only using the connectives $\pa,\bef$: in this case one may also say that the conclusions of $\pi$ are $C_1,\ldots,C_n$ ordered by the series parallel partial order described by $G$.

A \emph{pomset proof-net} or a correct pomset proof structure is a proof structure when it satisfies the \emph{correctness criterion}: every elementary circuit (directed cycle) alternating $R$ and $B$ edges (\ae-circuit for short) contains an $R$ chord (an $R$ edge or arc not on the circuit but whose end vertices are on the circuit).

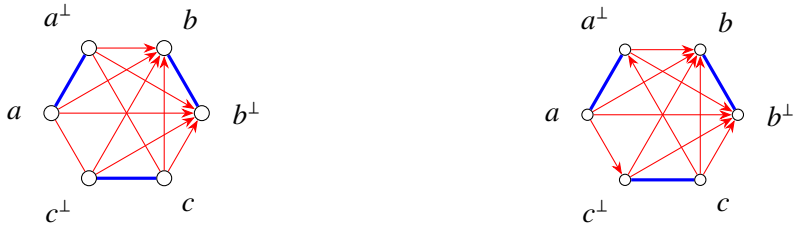
\begin{figure}[t]
\begin{tikzpicture}[PS]
\tikzset{vertex/.append style={inner sep=2pt}} 
  \vertex[orientation=180,on circle]{a};
    \vertex[orientation=120,neg,on circle]{a};
   \vertex[orientation=60,on circle]{b};
   \vertex[orientation=0,neg,on circle]{b};
   \vertex[orientation=-60,on circle]{c};
   \vertex[orientation=-120,neg,on circle]{c};
   \blink{c}; 
   \blink{b};
   \blink{a};
    \draw[Rlink,->] (a) -- (b);
    \draw[Rlink,->] (acomp) -- (b);
    \draw[Rlink,->] (c) -- (b);
    \draw[Rlink,->] (ccomp) -- (b);
   \draw[Rlink,->] (a) -- (bcomp);
    \draw[Rlink,->] (acomp) -- (bcomp);
    \draw[Rlink,->] (c) -- (bcomp);
    \draw[Rlink,->] (ccomp) -- (bcomp);
    \draw[Rlink] (a) -- (ccomp);
     \draw[Rlink] (acomp) -- (c);
 \end{tikzpicture} 
\hspace*{3cm}
 \begin{tikzpicture}[PS]
  \vertex[orientation=180,on circle]{a};
    \vertex[orientation=120,neg,on circle]{a};
   \vertex[orientation=60,on circle]{b};
   \vertex[orientation=0,neg,on circle]{b};
   \vertex[orientation=-60,on circle]{c};
   \vertex[orientation=-120,neg,on circle]{c};
   \blink{c}; 
   \blink{b};
   \blink{a};
    \draw[Rlink,->] (a) -- (b);
    \draw[Rlink,->] (acomp) -- (b);
    \draw[Rlink,->] (c) -- (b);
    \draw[Rlink,->] (ccomp) -- (b);
   \draw[Rlink,->] (a) -- (bcomp);
    \draw[Rlink,->] (acomp) -- (bcomp);
    \draw[Rlink,->] (c) -- (bcomp);
    \draw[Rlink,->] (ccomp) -- (bcomp);
    \draw[Rlink,->] (a) -- (ccomp);
     \draw[Rlink,<-] (acomp) -- (c);
 \end{tikzpicture} 
\caption{Two incorrect handsome proof structures  namely 
$((a\ts c^\perp)\pa (a^\perp \ts c))\bef (b\pa b^\perp)$ and 
$((a\bef c^\perp)\pa (c \bef a^\perp))\bef (b\pa b^\perp)$.
Both are incorrect because of the 
 chordless \ae-circuit: {$a,c^\perp,c,a^\perp,a$}. Red or Regular arcs represent the conclusion formula (dicograph) while Blue or Bold edges represent the axioms. }
 \end{figure}

\begin{figure}[t]
\begin{tikzpicture}[PS]
  \vertex[orientation=180,on circle]{a};
    \vertex[orientation=120,neg,on circle]{a};
   \vertex[orientation=60,on circle]{b};
   \vertex[orientation=0,neg,on circle]{b};
   \vertex[orientation=-60,on circle]{c};
   \vertex[orientation=-120,neg,on circle]{c};
   \blink{c}; 
   \blink{b};
   \blink{a};
    \draw[Rlink,->] (a) -- (b);
    \draw[Rlink,->] (acomp) -- (b);
    \draw[Rlink,->] (c) -- (b);
    \draw[Rlink,->] (ccomp) -- (b);
   \draw[Rlink,->] (a) -- (bcomp);
    \draw[Rlink,->] (acomp) -- (bcomp);
    \draw[Rlink,->] (c) -- (bcomp);
    \draw[Rlink,->] (ccomp) -- (bcomp);
     \draw[Rlink] (a) -- (c);
    \draw[Rlink] (a) -- (ccomp);
     \draw[Rlink] (acomp) -- (c);
    \draw[Rlink] (acomp) -- (ccomp);   
 \end{tikzpicture} 
\hspace*{3cm}
 \begin{tikzpicture}[PS]
  \vertex[orientation=180,on circle]{a};
    \vertex[orientation=120,neg,on circle]{a};
   \vertex[orientation=60,on circle]{b};
   \vertex[orientation=0,neg,on circle]{b};
   \vertex[orientation=-60,on circle]{c};
   \vertex[orientation=-120,neg,on circle]{c};
   \blink{c}; 
   \blink{b};
   \blink{a};
    \draw[Rlink,->] (a) -- (b);
    \draw[Rlink,->] (acomp) -- (b);
    \draw[Rlink,->] (c) -- (b);
    \draw[Rlink,->] (ccomp) -- (b);
   \draw[Rlink,->] (a) -- (bcomp);
    \draw[Rlink,->] (acomp) -- (bcomp);
    \draw[Rlink,->] (c) -- (bcomp);
    \draw[Rlink,->] (ccomp) -- (bcomp);
    \draw[Rlink,->] (a) -- (ccomp);
     \draw[Rlink,->] (acomp) -- (c);
 \end{tikzpicture} 
\caption{Two correct handsome proof structures (or proof-nets) 
$((a\pa a^\perp)\ts (c^\perp \pa c))\bef (b\pa b^\perp)$ and 
$((a\bef c^\perp)\pa (a^\perp \bef c))\bef (b\pa b^\perp)$ --- none of them contains a chordless \ae-circuit. 
Red or Regular arcs represent the conclusion formula (dicograph) while Blue or Bold edges represent the axioms. 
}
\end{figure}
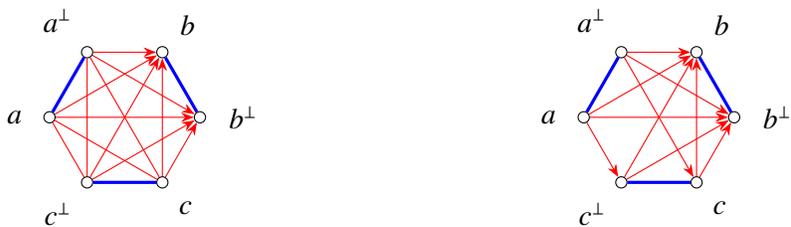

It  is possible to interpret a  proof-net and even a simple proof structure 
with conclusion $T$ (a formula or a dicograph of atoms) as a set of tokens of the corresponding coherence space $T$. 
So the fact that the interpretation should be a clique in the coherence space $\llbracket T\rrbracket$ and also cut-elimination guide the 
design of the deductive systems for pomset logic \cite{Retore2021Lambek}. 

Computing the semantics of a proof-net  is rather easy, using Girard's experiments --- we rather define experiments from axioms to conclusions as in \cite{Ret94rc,Ret97} but it makes no difference. 
We define the interpretation of a proof structure, which is not necessarily a proof-net as a set of tokens of the web of the conclusion formula. Assume the axioms of a proof structure is the set of  $B$-edges $B=\{a_i\edge a_i^\perp|1\leq i \leq n\}$ and that each of the $a_i$ has a corresponding coherence space also denoted by $a_i$. 
For each $a_i$ choose a token $\alpha_i\in |a_i|$. If the conclusion of the proof structure is a dicograph $T$ then replacing the two occurrences  $a_i$ and  $a_i^\perp$ that are connected with an axiom (linked with a $B$ edge) with $\alpha_i$ yields a term, which when converting $x * y$ (with $*$ being one of the connectives, $\pa,\bef,\ts$) with $(x,y)$, yields a 
token in the web of the coherence space associated with $T$ --- this token in $|T|$ is called the \emph{result} of the  experiment.

Given a normal (cut-free) proof structure $\pi$ with conclusion $T$ 
the interpretation $\llbracket \pi \rrbracket$ of the normal proof structure $\pi$
is the set of all the results of the experiments on $\pi$. As initially shown for proof-nets with links in \cite{Ret94rc} or for handsome proof nets in \cite{Retore2021Lambek}: 

\begin{thm}\label{semcorrect} 
A proof structure $\pi$ with conclusion $T$ is a proof-net (contains no chordless \ae-circuit) if 
and only if its interpretation $\llbracket \pi \rrbracket$ is a clique of the coherence space  $T$ 
--- rememember that only cliques of $T$ can be the interpretation of some proof of $T$. 
\end{thm}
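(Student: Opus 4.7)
The approach will be to characterize, for any two experiments $e_1, e_2$ on $\pi$, whether the resulting tokens $t_1, t_2 \in |T|$ are coherent, and then to relate that coherence to graph-theoretic properties of $\pi$. First I would note that $\llbracket \pi \rrbracket$ is a clique of $T$ iff for every pair of experiments the two resulting tokens are pairwise coherent in $T$. Since an experiment is determined by a choice of one token per axiom, any two experiments $e_1, e_2$ differ on some set $D$ of axioms, and by varying the choices appropriately every nonempty set of axioms can be realized as the disagreement set of some pair of experiments (provided the atomic webs contain at least two tokens; the one-token case is trivial). So one has to characterize, in terms of $\pi$, for which $D$ the associated two results are coherent.

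Next I would analyze $t_1\coh t_2[T]$ by induction on the formula (equivalently on the dicograph) $T$, using the inductive clauses defining coherence for $\pa$, $\ts$ and $\bef$. The key observation is that whether $t_1$ and $t_2$ are strictly coherent, strictly incoherent, or equal depends only on the dicograph $\dicog(T)$ restricted to the atoms incident to axioms of $D$, together with the axiom edges in $D$. For the commutative fragment one recovers the classical MLL picture of \cite{Ret94rc}: incoherence corresponds exactly to a chordless alternating cycle. The addition of $\bef$ introduces directed $R$-arcs, and the asymmetric clause ``$\alpha\scoh\alpha'[A]$ or ($\alpha=\alpha'$ and $\beta\coh\beta'[B]$)'' ensures that when such a directed arc is traversed by an \ae-circuit the orientation matters: the induction precisely yields that incoherence of $t_1, t_2$ amounts to the existence of a chordless \ae-circuit on the atoms touched by $D$ whose $R$-arcs are compatibly oriented.

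From this combinatorial lemma both implications follow. For $(\Leftarrow)$ (correctness implies clique), if some pair $(t_1,t_2)$ were not coherent in $T$, the lemma would provide a chordless \ae-circuit in $\pi$, contradicting the correctness criterion. For $(\Rightarrow)$ (clique implies correctness), starting from a chordless \ae-circuit $\gamma$ in $\pi$, I would construct two experiments differing exactly at the axioms of $\gamma$ by flipping, at each such axiom, the choice of token to a distinct one in the corresponding web; the resulting tokens would be incoherent in $T$ by the lemma, so $\llbracket\pi\rrbracket$ would fail to be a clique.

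The main obstacle is the inductive characterization in the second paragraph, i.e.\ the precise combinatorial lemma in the presence of the non-commutative $\bef$. The clauses for $\pa$ and $\ts$ are symmetric and recover the known MLL argument, but the $\bef$ clause is asymmetric and involves an equality test on the first coordinate, so one must carefully track how each $R$-arc is oriented along any candidate \ae-circuit and verify that the notion of chord (an $R$-edge or $R$-arc whose endpoints lie on the circuit) in the statement of the criterion matches exactly what the inductive computation of coherence produces. Once this bookkeeping is done—most cleanly by induction on the depth of the $\pa/\bef/\ts$-tree of $T$, computing both ``$t_1 \scoh t_2$'' and ``$t_1 \sincoh t_2$'' simultaneously—the theorem drops out.
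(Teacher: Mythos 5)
The paper itself gives no proof of Theorem~\ref{semcorrect}: it is stated and imported from \cite{Ret94rc} and \cite{Retore2021Lambek}, so your attempt can only be measured against the strategy of those cited proofs. Your overall plan does match that strategy: reduce cliqueness to pairwise coherence of experiment results; observe that the three-valued outcome ($\scoh$, $=$, $\sincoh$) for a pair of results is computed bottom-up through the $\pa/\ts/\bef$ tables and that $=$ is a unit for all three operations, so only the disagreement set $D$ of axioms matters; then match strict incoherence against chordless \ae-circuits. That is the right skeleton.

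There are nevertheless two genuine gaps. First, the combinatorial lemma of your second paragraph \emph{is} the theorem, and you only assert it; the proposed ``induction on the depth of the $\pa/\bef/\ts$-tree of $T$'' is not obviously workable, because a chordless \ae-circuit does not respect the subformula decomposition --- its axioms jump between arbitrary subformulas, so the induction hypothesis cannot simply be the statement restricted to immediate subformulas. The existing proofs go through a noticeably more delicate analysis (minimal incoherent pairs of experiments and structural properties of (di)cographs), and in the non-commutative case one must in addition show that the circuit produced is consistently \emph{oriented} along the $R$-arcs. Second, your construction of the two witnessing experiments from a chordless \ae-circuit $\gamma$ is incorrect as stated: picking ``a distinct token'' at each axiom of $\gamma$ is not enough. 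At each such axiom one must choose a pair of distinct tokens whose coherence is oriented as the circuit dictates (it matters which of the two dual ends of the axiom carries the strictly coherent pair), and realizing every required orientation forces each atomic web to contain both a strictly coherent and a strictly incoherent pair of tokens. Hence ``at least two tokens'' does not suffice, and the one-token case is not ``trivial'' but a counterexample to the stated equivalence: with $|a|$ a singleton, $a\ts a^\perp$ has a clique (a singleton) as its interpretation although it contains a chordless \ae-circuit; with two coherent tokens in each of $|a|$ and $|b|$, the incorrect structure $(a\ts b)\pa(a^\perp\ts b^\perp)$ also has a clique as its interpretation. The theorem implicitly assumes the atoms are interpreted by sufficiently rich coherence spaces; your proof must make that hypothesis explicit and actually use it when building the two experiments.
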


When $\pi$ is not normal, i.e. includes cuts that are formulas $K\ts K^\perp$, not all  experiments succeed and provide results:  an experiment is said to succeed when in every cut $K \otimes K^\perp$ (a cut may be viewed as a conclusion $K\otimes K^\perp$) the value $\alpha$ on an atom $a$ of the cut $K$ is the same as the value on the corresponding atom $a^\perp$ in  $K^\perp$. Otherwise the experiment fails and has no result.  The set of the results of all succeeding experiments of a proof \emph{net}  $\pi$ is a clique of the coherence space $T$. It is the interpretation $\llbracket \pi \rrbracket$ of the normal proof-net $\pi$. We can prove, that whenever $\pi$ reduces to $\pi'$ by cut elimination $\llbracket \pi\rrbracket=\llbracket \pi' \rrbracket$, see e.g. \cite{Ret94rc}.

\section{The ``Flag" Modality in Coherence Spaces}

The construction of the coherence space $\flag A$ is somehow inspired by the fact that the web $|!A|$ of $!A$ consists in the finite sets of elements in the web $|A|$. 
The intuition is that  the web of $\flag A$ consists in linearly ordered copies of the web of $A$,  the underlying order being  equivalent to two ordered copies of it, so that $\flag A$ is isomorphic to $\flag A \bef \flag A$, i.e. the web  $|\flag A|$ is $\cdots \bef |A|\bef |A|\bef |A|\bef |A|\bef |A|\bef \cdots$. For the order to be equivalent to two copies of it, we first succeeded with $\mathbb{Q}$ copies of $A$, but elements of $|\flag A|$ were lacking a finite description. Then, following a suggestion by Achim Jung, we looked at $2^\omega$ copies of $|A|$ and even better at the continuous functions from $2^\omega$ to $|A|$ endowed with the discrete topology on $|A|$ and that way  elements of the web $|\flag A|$ of $\flag A$ have a finite description.

%
%

\subsection{Remarks on the Continuous Functions from the Cantor Space to a Discrete Space}

Let us write  $\deux$ for $\{0,1\}$ and  $\deux^{*}$ for the set of finite words on $\deux$, including the
empty word, $\deux^{\omega}$ for the set of infinite words on $\deux$. Letters like $w, v, u$ range over $\deux^{\omega},$ while $m$  range over $\deux^{*}$.  The expression $w=m(m')^{\omega}$ stands for $w=m m' m' m' m' m' \cdots$

The set $\deux^{\omega}$ of infinite words on $\deux$ is assumed to be endowed with: 
\begin{itemize} 
\item 
 the usual total lexicographical order on infinite strings defined by: 
$$w_1<w_2 \ \mbox{iff}\ \exists m\in\deux^*\ \exists w'_1, w'_2\in \deux^\omega\  w_1= m0w'_1\ and\ w_2=m1w'_2$$ 
\item 
the usual product topology generated by the basis of clopen sets $(U_m)_{m\in \deux^*}$ with $$U_m=\{w\in \deux^\omega | \exists w'\in \deux^\omega\ w=mw'\}$$ 
\end{itemize}

A well-known result on $\deux^\omega$ is: 
\begin{prop}\label{minimum} 
A clopen set $U$ of $\deux^\omega$  always is a finite union of base clopen sets, and thus a clopen set always has a minimum element (and a maximal element as well). 
\end{prop}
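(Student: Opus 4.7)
The plan is to prove this by standard compactness, together with a short remark on the lexicographic order restricted to basic clopens.

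First I would observe that $\deux^\omega$, equipped with the product topology generated by the $U_m$, is compact: it is a countable product of the finite discrete space $\deux$, so Tychonoff (or, directly, König's lemma on the infinite binary tree) applies. Since $U$ is assumed clopen, it is in particular closed in the compact space $\deux^\omega$, hence $U$ is itself compact. On the other hand, $U$ is open, so for every $w \in U$ there exists a basic open $U_{m(w)}$ with $w \in U_{m(w)} \subseteq U$ (here $m(w)$ is any finite prefix of $w$ long enough that $U_{m(w)}$ already fits inside $U$; such prefixes exist because the $U_m$ form a basis of the topology). The family $(U_{m(w)})_{w \in U}$ is then an open cover of $U$, and by compactness it admits a finite subcover $U = U_{m_1} \cup \cdots \cup U_{m_k}$, which is the first claim.

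For the second claim, I would use that each basic clopen $U_m$ has an obvious minimum and maximum in the lexicographic order, namely $m\, 0^\omega$ and $m\, 1^\omega$ respectively: indeed, any other $w = m w' \in U_m$ differs from $m 0^\omega$ in some position past $|m|$ where $w$ has a $1$ and $m 0^\omega$ has a $0$, and symmetrically for the maximum. Given the finite decomposition $U = U_{m_1} \cup \cdots \cup U_{m_k}$ obtained above, the minimum of $U$ is then simply $\min_{1 \le i \le k}(m_i\, 0^\omega)$ and the maximum is $\max_{1 \le i \le k}(m_i\, 1^\omega)$; both exist because they are extrema of a finite subset of the totally ordered set $(\deux^\omega, <)$.

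There is no real obstacle: the only subtlety worth stating explicitly in the write-up is the passage from ``$U$ is open'' to ``each point of $U$ sits inside a basic clopen contained in $U$'', which uses that the $U_m$ form a basis, and the fact that closed subsets of compact spaces are compact. The lexicographic extrema of the individual $U_m$ are immediate from the definition of the order.
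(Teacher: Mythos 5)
Your proof is correct and follows essentially the same route as the paper's: write the open set $U$ as a union of basic clopens, use that $U$ is closed in the compact space $\deux^\omega$ to extract a finite subcover, and take the minimum (resp.\ maximum) of the finitely many extrema $m_i\,0^\omega$ (resp.\ $m_i\,1^\omega$). You merely spell out a few steps the paper leaves implicit, such as the compactness of $\deux^\omega$ and the identification of the extrema of each $U_m$.
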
 

\begin{proof}
As $U$ is open, 
it is a union of base clopen set say $(O_i)_{i\in I}$. But as $U$ is closed (clopen) in a compact, $U$ is compact, and from the covering of $U$ by the $(O_i)_{i\in I}$, there exists  a finite covering $(O_{i_k})_{k\in K}$  of $U$ by base clopen sets with $K$ finite. Base clopen sets do have a minimum element, and the minimum of those finitely many minimums is the minimum of $U$. 
\end{proof}

Given a continuous function $f$  from $\deux^\omega$  to $M$ endowed with the discrete topology,  
$f^{-1}\{a\}$ is a clopen set of $\deux^\omega$  and $\cup_{a\in M} f^{-1}\{a\}= \deux^\omega$.

Because $\deux^\omega$ is compact a finite open cover can be extracted from $\cup_{a\in M} f^{-1}\{a\}= \deux^\omega$ 
so there exists a finite number $n$ such that 
$\cup_{1\leq i \leq n} f^{-1}\{a_i\}= \deux^\omega$, and replacing each clopen $f^{-1}\{a_i\}$ by a finite union of base clopen sets, we obtain a finite cover by base clopen sets $U_k$, $1\leq k \leq p$, with $f$ constant on each $U_m$. 

From this one easily obtains: 

\begin{prop}  \label{repfctcont} 
$(\gt_{M} \text { generic trees on } M)$ The set $\gt_{M}$ of \emph{continuous functions from
$\deux^{\omega}$ to a set $M$} (discrete topology) is in a one-to-one correspondence with the set of
finite binary trees on $M$ such that any two sister leaves have distinct labels.

Thus, an element $f$ of $\gt_{M}$ may either be described:
\begin{enumerate} 
\item \label{couple} 
As a finite set $\left\{(m_{1}, a_{1}), \ldots,(m_{k}, a_{k})\right\} \subset \mathcal{P}_{\text {fin}}(\deux^{*} \times M)$ satisfying:
\begin{enumerate} 
\item  \label{existecouple} $\forall w \in \deux^{\omega}\ \exists ! i\leq k\   \exists w' \in \deux^{\omega} \quad  w=m_{i} w'$
\item \label{unicouple} $\forall i, j\leq k\  \left[\exists m \in \deux^{*}\  m_{i}=m 0 \text { and } m_{j}=m 1\right] \Rightarrow a_{i} \neq a_{j}$ 
\end{enumerate}
In this formalism $f(w)$ is computed as follows: applying $(a),$ there exists a unique $i$ such that  $w=m_{i} w'$ for some $w^\prime$, let $f(w)=f(m_{i}w')=f(m_i(0)^\omega)=f(m_i(1)^\omega)=a_i$
\item \label{termesnormaux} As the normal form of a term of the following grammar where $\underline{M}$ stands for $\{\underline{x} | x\in M\}$\footnote{In order to distinguish between the constant function $\underline{a}$ mapping every word of $\deux^\omega$ onto $a\in M$ and $a\in M$.}  :
$$
\mathcal{T}_{M}:: \underline{M} |\left\langle\mathcal{T}_{M}\, \mathcal{T}_{M}\right\rangle
$$
where the reduction is $\forall x \in M$  
$t\left[\langle\underline{x}\, \underline{x}\rangle\right] \longrightarrow t[\underline{x}]$ where $t[u]$ with  $u$ in $\mathcal{T}_{M}$ means
a term of $\mathcal{T}_{M}$ having an occurrence of the subterm $u \in \mathcal{T}_{M}$.
In this formalism $f(w)$ is computed as follows:
$$ 
\begin{array}{rclcrcl}
f&=&\underline{a} &:& f(w)&=&a\\[1ex] 
f&=&\left\langle t_{0}\, t_{1}\right\rangle &:& f(0 w)&=&t_{0}(w)\\ 
&&&& f(1 w)&=&t_{1}(w)\\
\end{array}
$$
\end{enumerate} 
\end{prop}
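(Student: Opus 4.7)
The plan is to extract the representation from the finite cover already produced in the paragraph preceding the statement, then refine it to meet condition (\ref{existecouple}), normalise it to meet condition (\ref{unicouple}), and finally read off both the tree description and the term description from a single partition of $\deux^\omega$ by base clopens on which $f$ is constant.

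First I would take the finite cover $U_{k}$, $1\leq k\leq p$, by base clopens on each of which $f$ is constant, provided by compactness and Proposition \ref{minimum}. Two base clopens $U_{m}$, $U_{m'}$ are always either disjoint or nested according to whether one of $m,m'$ is a prefix of the other, so I may iteratively replace any $U_{m'}$ strictly containing some $U_m$ of the cover by the finitely many base clopens $U_{m' v}$ (for $v$ of appropriate length, $v$ not a prefix extension of $m$), keeping the value of $f$ unchanged on each part. After finitely many such refinements, the remaining words $m_{1},\dots,m_{k}$ are pairwise prefix-incomparable and their $U_{m_i}$'s cover $\deux^\omega$, which is exactly condition (\ref{existecouple}).

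Next I enforce condition (\ref{unicouple}) by a merge rule: whenever two sister base clopens appear as $(m0, a)$ and $(m1, a)$ with the same label $a$, I replace them by the single pair $(m, a)$. This does not alter the function $f$, and it strictly decreases $\sum_i |m_i|$, so the process terminates and yields the desired representation in the sense of item~\ref{couple}. For uniqueness, suppose $\{(m_i,a_i)\}$ and $\{(m'_j,b_j)\}$ both describe $f$ and satisfy (\ref{existecouple})--(\ref{unicouple}). By (\ref{existecouple}) applied to the second family, each $m_i$ has some $m'_j$ as prefix; among all pairs $(i,j)$ with $m'_j$ a proper prefix of $m_i$, pick one with $|m_i|-|m'_j|$ minimal. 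Then both base clopens $U_{m'_j 0}$ and $U_{m'_j 1}$ lie in the first family (otherwise minimality is contradicted) and they carry the common value $b_j$ of $f$ on $U_{m'_j}$, contradicting (\ref{unicouple}) for the first family; hence the two families coincide.

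Finally, the passage to item \ref{termesnormaux} is purely syntactic: the words $m_1,\dots,m_k$ of a representation satisfying (\ref{existecouple}) are exactly the leaves of a finite full binary tree on $\deux$, so labelling leaf $m_i$ with $a_i$ produces a term of the grammar $\mathcal{T}_M$ (the leaf at address $m$ being an $\underline{a_i}$, internal nodes being the angle brackets), and condition (\ref{unicouple}) is the normal-form condition for the rewriting $t[\langle \underline{x}\,\underline{x}\rangle]\longrightarrow t[\underline{x}]$. Computing $f(w)$ from either description is routine: using (\ref{existecouple}) gives the unique $i$ with $w\in U_{m_i}$, and descending the tree according to the letters of $w$ reaches the corresponding leaf. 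The main obstacle, and the only point requiring care, is the uniqueness argument above; the existence, normalisation and bijection with trees are straightforward once the partition by base clopens is in hand.
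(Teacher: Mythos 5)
Your overall route --- extract the finite cover by base clopens on which $f$ is constant from compactness and Proposition~\ref{minimum}, refine it to a prefix-free exact cover to get (\ref{existecouple}), merge equal-labelled sisters to get (\ref{unicouple}), then read off the tree and the term --- is exactly what the paper intends: it gives no proof of Proposition~\ref{repfctcont} at all, stating it as an easy consequence of the preceding compactness paragraph. Your existence and normalisation steps, and the termination argument via $\sum_i|m_i|$, are correct, as is the identification of (\ref{unicouple}) with normality for the rewrite $t[\langle\underline{x}\,\underline{x}\rangle]\longrightarrow t[\underline{x}]$.

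However, your uniqueness argument has a genuine error. First, the opening claim that ``each $m_i$ has some $m'_j$ as prefix'' does not follow from (\ref{existecouple}): applying (\ref{existecouple}) to $w=m_i(0)^\omega$ only yields an $m'_j$ \emph{comparable} to $m_i$, and it may be a proper \emph{extension} of $m_i$ (e.g.\ $\{0,1\}$ versus $\{00,01,1\}$ as address sets). Second, even when a pair with $m'_j$ a proper prefix of $m_i$ exists, choosing it with $|m_i|-|m'_j|$ minimal does not force $m'_j0$ and $m'_j1$ to be addresses of the first family: if $m'_j0$ is absent, the leaves of the first family below it give pairs of difference at least $2$, which is perfectly compatible with the minimum being at least $2$, so ``otherwise minimality is contradicted'' does not go through. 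The repair is local but uses a different extremal choice: given such a pair, $f$ is constant (equal to $b_j$) on all of $U_{m'_j}$; among the addresses of the first family that have $m'_j$ as a proper prefix (and these cover $U_{m'_j}$, since no $m_l$ can be a prefix of $m'_j$ without containing $U_{m_i}$), pick one, say $m_l=nc$, of \emph{maximal length}. Any first-family address meeting $U_{n\bar c}$ must extend $n\bar c$ and has length at most $|m_l|$, hence equals $n\bar c$; so $n0$ and $n1$ are sister addresses of the first family, both inside $U_{m'_j}$, both labelled $b_j$, contradicting (\ref{unicouple}). A symmetric argument excludes the case where some $m_i$ is a proper prefix of some $m'_j$; once all cross pairs are excluded, (\ref{existecouple}) forces the two address sets to coincide, and the labels are determined by $f$. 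You should also record the (immediate) converse that any family satisfying (\ref{existecouple}) and (\ref{unicouple}) defines a continuous function, so that the correspondence is onto the set of trees.
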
 

\paragraph{Example:} Let $M=\{a, b, c\}$ Here are the three description of the same element of $\gt_{M}$ :
\begin{enumerate} 
\setcounter{enumi}{-1}
\item  As a function
$$
\begin{array}{rclcrcl} 
f(000 w)&=&a&\hspace*{2em}&f(100 w) &=&a\\
f(001 w)&=&a&&f(101 w) &=&b\\ 
f(010 w)&=&a&&f(110 w) &=&a\\ 
f(011 w)&=&a&&f(111 w) &=&b\\ 
\end{array}
$$
\item  as a finite set of pairs $\left\{(m_{i}, a_{i})\ |\  m_{i} \in \deux^{*} \text { and } a_{i} \in M\right\}$ :
$$
f=\{(0, a),(100, a),(101, b),(110, a),(111, b)\}
$$
\item  as a normal term of $\mathcal{T}_{M}: f=\langle\underline{a}\langle\langle\underline{a}\, \underline{b}\rangle\langle\underline{a}\, \underline{b}\rangle\rangle\rangle$
such a term is the normal form of, e.g. $\langle\langle\underline{a}\, \underline{a}\rangle\langle\langle\underline{a} \underline{b}\rangle\langle\langle\underline{a} \underline{a}\rangle \langle \underline{b}\, \underline{b}\rangle \rangle\rangle\rangle$
\end{enumerate}

Here is one more easy remark: 

\begin{prop}\label{diffegalsup} 
Let $f, g$ be two functions in $\gt_{M}$. If $f \neq g$ then there exists $w \in \deux^{\omega}$ such that
$$
f(w) \neq g(w) \text { and } \forall w' {<}w\quad f(w')=g(w')
$$
\end{prop}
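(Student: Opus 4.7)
The plan is to realize the set of disagreement as a clopen subset of $\deux^\omega$ and then invoke Proposition \ref{minimum}.

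Concretely, set
$$D = \{w \in \deux^\omega \mid f(w) \neq g(w)\}.$$
Since $f \neq g$, the set $D$ is non-empty. My first task is to show $D$ is clopen. The quickest way is topological: $M$ carries the discrete topology, so $M \times M$ is discrete, which makes both the diagonal $\Delta$ and its complement $\Delta^c$ open. The pairing $w \mapsto (f(w), g(w))$ is continuous, hence $D$, being the preimage of the clopen set $\Delta^c$, is itself clopen in $\deux^\omega$.

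Once $D$ is known to be clopen and non-empty, Proposition \ref{minimum} gives that $D$ has a minimum element $w_0$ in the lexicographic order. By construction $f(w_0) \neq g(w_0)$, and minimality of $w_0$ in $D$ means that for every $w' < w_0$ one has $w' \notin D$, i.e.\ $f(w') = g(w')$. This is exactly the conclusion of the proposition.

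If one prefers an entirely combinatorial argument that avoids re-invoking topology, one can instead use the finite-tree descriptions of Proposition \ref{repfctcont}. Write $f = \{(m_1,a_1),\ldots,(m_k,a_k)\}$ and $g = \{(n_1,b_1),\ldots,(n_\ell,b_\ell)\}$. Their common refinement is a finite partition of $\deux^\omega$ into base clopen sets $U_{p_1},\ldots,U_{p_r}$ on which both $f$ and $g$ are constant, so $D$ is the finite union of those $U_{p_j}$ on which their constant values differ. Each $U_{p_j}$ has minimum $p_j 0^\omega$, and the minimum of finitely many such words is the minimum of $D$. I expect no real obstacle: the only thing to be careful about is arguing that the minimum of a finite union of base clopens is the minimum of the individual minima, which is immediate because the lexicographic order on $\deux^\omega$ is total.
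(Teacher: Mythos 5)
Your argument is correct and is essentially the paper's own proof: the paper likewise forms the disagreement set as the preimage of a clopen set under the continuous pairing $w\mapsto(f(w),g(w))$ (composed there with the indicator $\Delta$ of the diagonal) and then extracts its minimum via Proposition \ref{minimum}. Your combinatorial variant via common refinement of the two finite trees is a fine alternative but adds nothing essential.
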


\begin{proof}  The product of the discrete topological space $M$ by itself is the discrete topological space over $M \times M$.  Hence the function $\Delta$ from $M \times M$ to $\deux$ defined by $\Delta(x, y)=1$ iff $x=y$ is continuous. The function $(f, g)$ from $\deux^{\omega}$ to $M \times M$ defined by $(f, g)(w)=(f(w), g(w))$
is continuous, because we use the product topology on $M \times M$.  Therefore $(\Delta \circ(f, g))^{-1}(0)$ is a clopen set, which has a lowest element $w$ ending  with an infinite sequence of $0$ (as observed in proposition \ref{minimum}), i.e. $w=m(0)^{\omega}$.  Thus this $w$ enjoys $f(w) \neq g(w)$ and $f(v)=g(v)$ whenever $v{<} w$. 
\end{proof} 

\subsection{The ``Flag" Modality} 
As said in previous section, after a first solution with rational numbers in $[0,1[$, following a suggestion by Achim Jung, we arrived to the following solution: 

\begin{dfn}  
Let $A$  be a coherence space. We define $\flag  A$  as follows:
\begin{description} 
\item[web] $|\flag A|=\gt_{|A|}$ the set of continuous functions from $\deux^{\omega}$ to $|A|$ (discrete topology). 
\item[coherence] Two functions $f$ and $g$ of $\gt_{|\mathrm{A}|}=|\flag  A|$ are said to be strictly coherent whenever
$$
\exists w \in \deux^{\omega} f(w) \scoh g(w)[A] \text { and } \forall w' {<}  w\enspace f(w')=g(w')
$$
\end{description} 
\end{dfn}  

Observe that the above definition generalises to the infinite case $A_1\bef A_2 \bef \cdots \bef A_n$ with $A_i=A$ for all $i$ 
as defined in definition \ref{spcoh} of the coherence space  associated with an \sp ordered collection of coherence space. 

\subsection{Properties} 
The following is clear from proposition \ref{repfctcont}: 

\begin{prop} 
 (denumerable web) If $|A|$ is denumerable, so is $|\flag  A|$.
 \end{prop}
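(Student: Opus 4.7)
The plan is to exploit the finite combinatorial description of elements of $\gt_{|A|}$ provided by Proposition \ref{repfctcont}. Recall that by that proposition, any $f \in |\flag A| = \gt_{|A|}$ is uniquely determined by a finite set of pairs $\{(m_1,a_1),\ldots,(m_k,a_k)\} \subset \mathcal{P}_{\text{fin}}(\deux^* \times |A|)$ (equivalently, by a normal term of $\mathcal{T}_{|A|}$, i.e. a finite binary tree with labels in $|A|$ such that sister leaves have distinct labels). So the task reduces to a standard cardinality estimate on finite objects built from a countable alphabet.

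The steps I would carry out are as follows. First, $\deux^*$, the set of finite words on $\{0,1\}$, is denumerable as a countable union $\bigcup_{n\in\mathbb{N}} \deux^n$ of finite sets. Second, assuming $|A|$ is denumerable, the Cartesian product $\deux^* \times |A|$ is denumerable as the product of two denumerable sets. Third, for any denumerable set $X$, the set $\mathcal{P}_{\text{fin}}(X)$ of finite subsets of $X$ is denumerable (it is the countable union $\bigcup_{n\in\mathbb{N}} X^n / {\sim}$, or more simply one enumerates $X = \{x_0,x_1,\ldots\}$ and encodes each finite subset by the finite set of its indices). Applying this to $X = \deux^* \times |A|$ gives that $\mathcal{P}_{\text{fin}}(\deux^* \times |A|)$ is denumerable.

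Finally, by Proposition \ref{repfctcont}\eqref{couple}, the map sending $f \in \gt_{|A|}$ to the finite set of pairs describing it is an injection of $|\flag A|$ into $\mathcal{P}_{\text{fin}}(\deux^* \times |A|)$, so $|\flag A|$ is at most denumerable; it is infinite already because the constant functions $\underline{a}$ for $a \in |A|$ give an injection of $|A|$ into $|\flag A|$, so $|\flag A|$ is exactly denumerable.

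There is no real obstacle here: the work was already done in Proposition \ref{repfctcont}, which replaced the a priori uncountable set of continuous functions $\deux^\omega \to |A|$ by a set of finite combinatorial objects. The only care needed is to make the passage from ``finite description'' to ``member of a countable set of finite descriptions'' explicit, as above.
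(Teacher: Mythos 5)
Your proof is correct and is exactly the argument the paper intends: the paper gives no explicit proof, merely noting that the claim is ``clear from Proposition \ref{repfctcont}'', and your write-up is precisely the fleshing-out of that appeal --- injecting $|\flag A|$ into $\mathcal{P}_{\text{fin}}(\deux^* \times |A|)$ and invoking the standard countability of finite subsets of a countable set.
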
 
 
And next come a key property:

\begin{prop}  (self-duality) The modality $\flag$ is self-dual, i.e. $(\flag  A)^{\perp} \equiv \flag (A^{\perp})$
\end{prop}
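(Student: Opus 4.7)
The plan is to show that the identity map on webs provides the required linear isomorphism. First I observe that $|\flag A|=\gt_{|A|}=\gt_{|A^\perp|}=|\flag(A^\perp)|$ because the web of a coherence space and of its negation coincide. So the whole work consists in comparing the coherence relations on this common web, and showing that for any two distinct $f,g\in\gt_{|A|}$, exactly one of $f\scoh g[\flag A]$ and $f\scoh g[\flag(A^\perp)]$ holds; this is exactly what is needed to conclude $(\flag A)^\perp\equiv \flag(A^\perp)$, by the general principle used earlier in the paper to prove self-duality of $\bef$.

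The core technical point is that the existential witness $w$ in the definition of $\scoh[\flag A]$ is forced to be unique whenever $f\neq g$. By (the proof of) Proposition \ref{diffegalsup}, the set $\{w\mid f(w)\neq g(w)\}$ is clopen in $\deux^\omega$, and by Proposition \ref{minimum} this set admits a least element $w_0$. I would then verify: if $w$ witnesses $f\scoh g[\flag A]$, then on the one hand $f(w)\scoh g(w)[A]$ forces $f(w)\neq g(w)$, so $w\geq w_0$; on the other hand $\forall w'<w,\ f(w')=g(w')$ excludes every $w'<w$ from $\{w\mid f(w)\neq g(w)\}$, so $w_0\geq w$. Hence $w=w_0$, and the same argument applies verbatim to $\flag(A^\perp)$.

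Once this uniqueness is secured, the self-duality is immediate from the trichotomy of the coherence relation at $w_0$. Since $f(w_0)\neq g(w_0)$, exactly one of $f(w_0)\scoh g(w_0)[A]$ or $f(w_0)\sincoh g(w_0)[A]$ holds, i.e.\ exactly one of $f(w_0)\scoh g(w_0)[A]$ or $f(w_0)\scoh g(w_0)[A^\perp]$ holds. Translating this back through the definition of $\flag$, exactly one of $f\scoh g[\flag A]$ or $f\scoh g[\flag(A^\perp)]$ holds. Equivalently, for any two distinct tokens, $f\sincoh g[\flag A]$ iff $f\scoh g[\flag(A^\perp)]$, so taking negation yields $(\flag A)^\perp\equiv \flag(A^\perp)$ via the identity trace $\{(f,f)\mid f\in \gt_{|A|}\}$.

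The only delicate point — and it is really the only one — is the uniqueness of the witness $w$: without it, one could imagine a witness $w$ for $\flag A$ and a different witness $w''$ for $\flag(A^\perp)$ coexisting. That danger is precisely what the "minimum of a clopen set" argument of Propositions \ref{minimum} and \ref{diffegalsup} rules out; everything else is a direct reading off of the definitions.
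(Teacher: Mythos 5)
Your proposal is correct and follows essentially the same route as the paper: both reduce self-duality to the observation that the webs coincide, invoke Proposition~\ref{diffegalsup} to locate the least word $w$ where $f$ and $g$ disagree, and apply the trichotomy of the coherence relation at that point. Your explicit argument that the witness $w$ is \emph{unique} (via the minimum of the clopen disagreement set) is a welcome elaboration of the exclusivity of the disjunction, which the paper leaves implicit, but it is not a different method.
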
 

\begin{proof} 
Those two coherence spaces obviously have the same web. Hence it is equivalent to show that, given two distinct tokens $f, g$ in $|\flag  A|$ either $f\scoh g[\flag  A]$ (*) holds or  $f \scoh g \left[\flag (A^{\perp})\right]$ (**) holds. 
If $f \neq g$, then,  because of previous proposition \ref{diffegalsup},   there exists an infinite word $w$ in $\deux^{\omega}$ such that : 

$$f(w) \neq g(w) \mathrm{\ and\ }\forall w' {<}  w\enspace  f(w')=g(w').$$ 

Therefore, 
\begin{itemize} 
\item 
either [$f(w) \scoh g(w)[A]$ and $\forall w'{<} w\enspace  f(w')=g(w')$] holds 
\item 
or [$f(w) \scoh g(w)\left[A^{\perp}\right]$ and $\forall w'{<} w\enspace f(w')=g(w')$] holds.  
\end{itemize} 
This is component-wise equivalent to the expected exclusive disjunction (*) or (**).  
\end{proof}

\begin{prop}  (contraction isomorphism) There is a canonical linear isomorphism 
$\flag  A \leftspoon\!\rightspoon \flag A\bef \flag  A$
\end{prop}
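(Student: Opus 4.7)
My plan is to exhibit an explicit bijection between the webs $|\flag A|$ and $|\flag A \bef \flag A|$ and then verify that it preserves strict coherence in both directions; by self-duality of $\bef$ (and of $\flag$), this is enough to produce a pair of inverse linear morphisms.

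First I would define the candidate map $\phi \colon |\flag A| \to |\flag A| \times |\flag A|$ by cutting a tree at its root: for $f \in \gt_{|A|}$, set $\phi(f) = (f_0, f_1)$ where $f_i(w) = f(iw)$ for $i \in \{0,1\}$ and $w \in \deux^\omega$. Each $f_i$ is continuous (composition of $f$ with the continuous insertion $w \mapsto iw$), so $\phi(f) \in \gt_{|A|} \times \gt_{|A|} = |\flag A \bef \flag A|$. The inverse $\psi(f_0,f_1)$ is the concatenation $f$ with $f(0w)=f_0(w)$ and $f(1w)=f_1(w)$, which is continuous because the cover $\deux^\omega = U_0 \sqcup U_1$ by clopens allows continuous gluing. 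Equivalently, in the syntactic presentation of Proposition~\ref{repfctcont}, $\phi$ sends the generic tree $\langle t_0\, t_1\rangle$ to $(t_0, t_1)$, while a pair $(\underline{x},\underline{x})$ is sent to $\underline{x}$ by normalisation —- so $\phi$ and $\psi$ are mutually inverse bijections of webs.

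The heart of the proof is that $\phi$ preserves strict coherence. Given distinct $f, g \in |\flag A|$, Proposition~\ref{diffegalsup} yields a least $w \in \deux^\omega$ such that $f(w) \neq g(w)$ while $f$ and $g$ agree on every $w' < w$. I distinguish two cases according to the first letter of $w$. If $w = 0w'$, then every $v' \in \deux^\omega$ satisfies $0v' < 1u$ for all $u$; comparing with $w$, one checks that the witness for $f_0 \scoh g_0 [\flag A]$ is exactly $w'$, and that moreover $f_0$ and $g_0$ agree strictly below $w'$, so $f \scoh g [\flag A]$ iff $f_0 \scoh g_0 [\flag A]$ holds in $A$ at $w'$; also $f_1$ and $g_1$ are unconstrained. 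If $w = 1w'$, then for every $u$ we have $0u < 1w' = w$, so $f(0u) = g(0u)$, i.e.\ $f_0 = g_0$; and the witness for $f_1 \scoh g_1$ is $w'$. Unfolding the definition of $\bef$, strict coherence of $(f_0,f_1)$ and $(g_0,g_1)$ in $\flag A \bef \flag A$ holds iff $f_0 \scoh g_0 [\flag A]$, or $f_0 = g_0$ together with $f_1 \scoh g_1 [\flag A]$. The two cases above thus give exactly the equivalence $f \scoh g [\flag A] \iff \phi(f) \scoh \phi(g) [\flag A \bef \flag A]$, and the converse direction of each case is an immediate reconstruction ($w := 0w'$ or $w := 1w'$).

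Finally, I would conclude by taking the graphs $\{(f, \phi(f)) \mid f \in |\flag A|\}$ and $\{(\phi(f), f) \mid f \in |\flag A|\}$ as traces of two linear morphisms. The coherence equivalence just established is exactly what is needed to check that these graphs are cliques of $\flag A \multimap (\flag A \bef \flag A)$ and $(\flag A \bef \flag A) \multimap \flag A$ respectively, and they compose to the identity on either side because $\phi$ and $\psi$ are inverse as set-theoretic maps. The only delicate point in the whole argument is the case analysis of step three, and in particular the observation that the lexicographic order on $\deux^\omega$ is compatible with the prefix splitting (everything beginning with $0$ is strictly below everything beginning with $1$), which is precisely what matches the asymmetric definition of $\bef$.
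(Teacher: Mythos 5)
Your proposal is correct and follows essentially the same route as the paper: the same root-splitting bijection $f \mapsto (f_0,f_1)$ with $f_i(w)=f(iw)$ taken as the trace of the isomorphism, the same case analysis on the first letter of the least disagreement point $w$, and the same key observation that $0u < 1w'$ for all $u,w'$ matches the asymmetry of $\bef$. The only (harmless) cosmetic difference is that you phrase the bijection as a function and take its graph, while the paper writes down the relation $\mathcal{C}$ directly.
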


\begin{proof} 
Consider the following subset of $|\flag  A| \times|\flag  A\bef \flag  A|$:
$$
\mathcal{C}=\left\{(h,(h_{0}, h_{1})) | \forall w \in \deux^{\omega}\  h(0 w)=h_{0}(w) \text { and } h(1 w)=h_{1}(w)\right\}
$$
Let us see that it is the trace of a linear isomorphism between $\flag  A$ and $\flag  A\bef \flag  A$.

\noindent Firstly, $\mathcal{C}$ clearly defines a bijection, between the webs $|\flag A|$ and $|\ \flag  A\bef \flag  A\ |=|\flag  A| \ \times\ |\flag  A|$.

\noindent Secondly, let us see that, given $(h,(h_{0}, h_{1}))$ and $(g,(g_{0}, g_{1}))$, both in $\mathcal{C}$ we have
$$
(1):h\scoh g[\flag  A]\  \Longleftrightarrow\ (h_{0}, h_{1})\scoh (g_0, g_1)[\flag  A\bef \flag  A]:(2)
$$

\noindent $(1) \Longrightarrow (2)$ 
We assume that $h \scoh g[\flag  A]$, i.e. that $\exists w \in \deux^{\omega}\ h(w)\scoh g(w)$ and $\forall v{<} w\quad h(v)=g(v)$. 

\noindent 
In each of the two possible cases, $w=0 w'$ or $w=1 w'$, let us we show that  $(h_{0}, h_{1}) \scoh (g_{0}, g_{1})[\flag  A\bef \flag  A]$

\noindent\textit{Unless otherwise specified coherence relations are assumed to take place in the coherence space $A$.} 
\begin{enumerate} 
\setcounter{enumi}{-1} 
\item 
If $w=0 w'$ we have $h_{0} \scoh  g_{0}[\flag  A]$: 
\begin{itemize} 
\item 
$h_0(w')\scoh g_0(w')$ since $h_0(w')= h(0w')=h(w)$,\hfill  $g(w)=g(0w')=g_0(w')$ and  $h(w)\scoh g(w)$. 
\item 
 $h_0(v')=g_0(v')$ for all $v'{<}  w'$; indeed, $0v' < 0w'=w$ hence $h_0(v')=h(0v')=g(0v')=g_0(v')$. 
\end{itemize} 
\item 
If $w=1w'$ then $h_1\scoh g_1[\flag A]$ and $h_0=g_0$: 
\begin{itemize} 
\item 
 $h_1\scoh g_1[\flag A]$
 \begin{itemize}
 \item $h_1(w')\scoh g_1(w')$  since  $h_1(w')=h(1w')=h(w)$, \hfill $g_1(w')=g(1w')=g(w)$, \hfill $h(w)\scoh g(w)$. 
 \item $h_1(v')=g_1(v')$ for all $v'{<} w'$; indeed, $h_1(v')=h(1v')=g(1v')=g_1(v')$ since $1v'{<}1w'=w$. 
 \end{itemize} 
 \item $h_0=g_0$ since $h_0(u)=h(0u)=g(0u)=g_0(u)$ because $0u<1w'=w$. 
 \end{itemize} 
 \end{enumerate}

\noindent $(2) \Longrightarrow (1)$ 
We assume that $(h_{0}, h_{1}) \scoh (g_{0}, g_{1})[\flag  A\bef \flag  A]$ i.e. that either 
$h_{0} \scoh g_{0}[\flag A]$ or  ($h_{0}=g_{0}$ and $h_{1} \scoh  g_{1}[\flag A]$).  We show that in both cases we have $h \scoh  g [\flag  A]$
\begin{enumerate} 
\setcounter{enumi}{-1}
\item If $h_{0}\scoh g_{0} [\flag A]$ then there exists $w'$ such that $h_0(w')\scoh g_0(w')$ and $h_0(v')=g_0(v')$ for all $v'{<} w'$ and $h\scoh g [\flag A]$. 
Indeed: 
\begin{itemize} 
\item $h(0w')\scoh g(0w')$ because $h(0w')=h_0(w')$,\enspace $g(0w')=g_0(w')$ and $h_0(w')\scoh g_0(w')$.   
\item for all $u< 0w'$, one has $h(u)=g(u)$; indeed, if $u<0w'$ then $u=0u'$ with $u'<w'$, so $h(u)=h(0u')=h_0(u')$, $g(u)=g(0u')=g_0(u')$ and $h_0(u')=g_0(u')$ because $u'<w'$. 
\end{itemize} 
\item If $h_{1} \scoh g_{1}[\flag A]$ and $h_{0}=g_{0}$ then there exists $w'$ such that $h_{1}(w') \scoh  g_{1}(w')$ and $h_{1}(v')=g_{1}(v')$ for all $v'<w'$,
and for all $u$, $h_0(u)=g_0(u)$. We have $h \scoh  g [\flag  A]$: 
\begin{itemize} 
\item $h(1w')\scoh g(1w') [A]$; indeed $h(1w')=h_1(w')$ and $g(1w')=g_1(w')$ and $g_1(w')\scoh h_1(w')$. 
\item for all $v < 1w'$ one has $h(v)=g(v)$ since $v=0u$ or $v=1u'$ and 
\begin{itemize}
\item if $v=0u$ then $h(v)=h(0u)=h_0(u)=g_0(u)=g(0u)=g(v)$. 
\item if $v=1u'$ then $u'<w'$ and therefore $h(v)=h(1u')=h_1(u')$, \enspace $h_1(u')=g_1(u')$ (because $u'<w'$),\enspace  and $g(v)=g(1u')=g_1(u')$. 
\end{itemize} 
\end{itemize} 
\end{enumerate} 
\end{proof}

\begin{prop}[$A \text { retract of } \flag A$]  Any coherence space $A$
is a linear retract of the coherence space $\flag  A$, that is there exist two linear maps as in: 
$$
\begin{array}{rrcll} 
&& f_A &&\\[-1ex]  
&  &\--\!\!\!\--\!\!\!{\rightspoon} & &\\[-2ex] 
&A && \flag A&\\[-2ex] 
&& {\leftspoon}\!\!\!\--\!\!\!\-- &&\\[-1ex] 
&& t_A && \\ 
\end{array}
$$ 
\noindent with   $\mathrm{t}_{A} \circ \mathrm{f}_{A}=Id_A$  and  $f_{A} \circ t_{A}\subset Id_{\flag A}$. 
\end{prop}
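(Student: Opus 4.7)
The plan is to take $f_A$ and $t_A$ to be the ``constant function'' embedding of $A$ into $\flag A$ and its partial inverse on that image. For $\alpha \in |A|$ write $\underline{\alpha} \in \gt_{|A|} = |\flag A|$ for the constant function with value $\alpha$, and set
$$ f_A = \{(\alpha, \underline{\alpha}) \mid \alpha \in |A|\}, \qquad t_A = \{(\underline{\alpha}, \alpha) \mid \alpha \in |A|\}. $$

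First I would verify that each of these is a clique in the corresponding hom space, so that each determines a linear morphism. The observation doing all the work is that $0^\omega$ is the minimum of the lexicographic order on $\deux^\omega$, so the universal clause ``$\forall w' < w$'' from the definition of coherence on $\flag A$ is vacuous at $w = 0^\omega$. Applied to two distinct constants this gives $\underline{\alpha} \scoh \underline{\alpha'}[\flag A]$ iff $\alpha \scoh \alpha'[A]$, and consequently $\underline{\alpha} \sincoh \underline{\alpha'}[\flag A]$ iff $\alpha \sincoh \alpha'[A]$; in short, $\alpha \mapsto \underline{\alpha}$ transports the coherence relation exactly. Reading off the $A \multimap B$ table then makes it immediate that both traces are cliques: any two distinct entries fall into one of the diagonal cells ($\sincoh/\sincoh$ or $\scoh/\scoh$), both of which give $\scoh$.

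Next I would compose as relations on tokens. For $t_A \circ f_A$, a pair $(\alpha, \beta)$ belongs to the composite only if some intermediate $g$ satisfies both $(\alpha, g) \in f_A$ and $(g, \beta) \in t_A$; the first forces $g = \underline{\alpha}$, and the second then forces $\beta = \alpha$, so $t_A \circ f_A = \id_A$ on the nose. The analogous computation for $f_A \circ t_A$ forces the intermediate token to be some $\alpha$ and the output to be $\underline{\alpha}$, hence its trace equals $\{(\underline{\alpha}, \underline{\alpha}) \mid \alpha \in |A|\}$, which is a subset of $\id_{\flag A}$ as required.

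I anticipate no substantive obstacle: the whole retraction is essentially forced by the existence of a minimum $0^\omega$ in $\deux^\omega$, and the only subtlety worth flagging is that the ``$\forall w' < w$'' clause in the definition of coherence on $\flag A$ collapses to a trivial condition at $w = 0^\omega$, which is exactly what makes the ``constant'' embedding transport coherence faithfully and the retraction work.
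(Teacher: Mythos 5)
Your proof is correct and uses exactly the same traces as the paper ($f_A=\{(\alpha,\underline{\alpha})\}$, $t_A=\{(\underline{\alpha},\alpha)\}$) with the same composition computation; the only difference is that you additionally spell out the clique verification via the minimum $0^\omega$, which the paper leaves implicit. This is essentially the paper's own argument, carried out with a bit more care.
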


\begin{proof} 
Consider $Tr(f_A)=\{(a, \underline{a}) \ |\  a \in|A|\}\subset |A| \times |\flag  A|$,  where $\underline{a} \in|\flag  A|$ stands for the constant function mapping any element of $\deux^{\omega}$ to $a$. Let us call $Tr(t_A)$ the symmetric of $Tr(f_A)$, that is $Tr(t_A)=\{(\underline{a},a) \ |\  a \in|A|\}\subset |\flag A| \times |A|$. 
The trace $Tr(f_A)$ is the linear trace of $f_A$  from $A$ to $\flag  A$ while 
$Tr(t_A)$ is 
 the  linear trace of $t_A$ from $\flag  A$ to $A$. The compound $f_{A} \circ t_{A}$ is $\{(\underline{a}, \underline{a}) \ |\  a \in|A|\}$ which is a strict subset of $Id_{\flag A},$ while the compound $\mathrm{t}_{A} \circ \mathrm{f}_{A}$ is exactly
$Id_{A}$
\end{proof} 

\begin{prop}[$\flag$ is an endofunctor of Coh] Given $\ell: A \rightarrow B$ \text { defines } $\flag  \ell: \flag  A \rightarrow \flag  B$ by the following trace:
$$
\flag  \ell=\left\{(f, g) \ |\  \forall w \in \deux^{\omega}(f(w), g(w)) \in \ell\right\}
$$
This makes $\flag$ an endofunctor of the category of $\mathbf{Coh}$ of coherence spaces of definition \ref{Coh}. 
\end{prop}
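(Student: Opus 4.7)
The statement contains two pieces: that the displayed set is the trace of a linear morphism $\flag A \to \flag B$, and that the assignment $\ell \mapsto \flag \ell$ is functorial. I treat them in turn, with the preservation of composition being the most substantial obstacle.

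For the morphism part, by the correspondence between cliques of $A \multimap B$ and linear morphisms, it suffices to show that $\flag \ell$ is a clique of $\flag A \multimap \flag B$. Reading the $A \multimap B$ table on $\ell$ itself yields two pointwise facts I will reuse: for $(\alpha, \beta), (\alpha', \beta') \in \ell$, if $\alpha = \alpha'$ then $\beta \coh \beta' [B]$, and if $\alpha \scoh \alpha' [A]$ then $\beta \scoh \beta' [B]$ (so in particular $\beta \neq \beta'$). Pick $(f, g), (f', g') \in \flag \ell$. Consulting the same table applied now to $\flag A \multimap \flag B$, coherence of the pair is automatic when $f \sincoh f' [\flag A]$, so I only need $g \coh g' [\flag B]$ in the cases $f = f'$ and $f \scoh f' [\flag A]$. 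Assume $g \neq g'$ and let $w^{\ast}$ be the minimum disagreement point from Proposition \ref{diffegalsup}. When $f = f'$, applying the first pointwise fact at $w^{\ast}$ together with $g(w^{\ast}) \neq g'(w^{\ast})$ gives $g(w^{\ast}) \scoh g'(w^{\ast}) [B]$, hence $g \scoh g' [\flag B]$. When $f \scoh f' [\flag A]$ with witness $w_1$, the second pointwise fact at $w_1$ already forces $g(w_1) \neq g'(w_1)$, so $w^{\ast} \leq w_1$; at $w^{\ast}$ either $w^{\ast} = w_1$ and the second fact applies again, or $w^{\ast} < w_1$ and then $f(w^{\ast}) = f'(w^{\ast})$ (by agreement of $f, f'$ strictly below $w_1$), so the first fact applies.

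Preservation of identities is immediate: $\flag(\id_A) = \{(f, g) \mid \forall w\ f(w) = g(w)\} = \id_{\flag A}$. For $\flag(m \circ \ell) = \flag m \circ \flag \ell$ the inclusion $\supseteq$ is a routine unfolding; the reverse inclusion is the main obstacle, since it requires upgrading a pointwise existence of intermediate tokens $\beta_w \in |B|$ into a single continuous $g \in |\flag B|$. I exploit the structure of generic trees: for $(f, h) \in \flag(m \circ \ell)$, the product map $(f, h) \colon \deux^{\omega} \to |A| \times |C|$ is continuous with discrete target, hence by Proposition \ref{repfctcont} applied to the product (equivalently, by compactness of $\deux^{\omega}$) its image is finite and each fibre is clopen, partitioning $\deux^{\omega}$ into finitely many clopen cells on which $(f, h)$ is constant. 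For each value $(a, c)$ in that finite image, pick some $b_{(a,c)} \in |B|$ with $(a, b_{(a,c)}) \in \ell$ and $(b_{(a,c)}, c) \in m$ (which exists by hypothesis $(a,c) \in m \circ \ell$), and set $g(w) := b_{(f(w), h(w))}$. This $g$ is piecewise constant on a finite clopen partition, hence continuous and so in $|\flag B|$; by construction $(f, g) \in \flag \ell$ and $(g, h) \in \flag m$, so $(f, h) \in \flag m \circ \flag \ell$.
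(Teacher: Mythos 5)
Your proof is correct and follows essentially the same route as the paper: the clique verification via the two pointwise facts read off the $A\multimap B$ table and the minimum-disagreement point of Proposition \ref{diffegalsup}, and the hard inclusion $\flag(m\circ\ell)\subseteq \flag m\circ\flag\ell$ via finiteness of the generic tree of the continuous product map $(f,h)$, choosing one intermediate token per cell so that $g$ is piecewise constant on a finite clopen partition. The only cosmetic difference is that you choose $b$ per value in the image of $(f,h)$ rather than per leaf of its generic tree, which sidesteps the paper's remark about normalising the resulting tree; also note that in the case $f\scoh f'[\flag A]$ what is actually required is \emph{strict} coherence $g\scoh g'[\flag B]$, which your argument does deliver even though you announce only $g\coh g'$.
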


\begin{proof} 
\noindent\textbf{Firstly, let us show that $\flag  \ell$ defines a linear map from $\flag  A$ to $\flag  B$.}  
Let $(f, g),(f', g')$ be in $\flag  \ell$.
\begin{itemize}
\item 
Assume that $f \scoh f'[\flag  A] .$ Thus there exists $w$ such that $f(w) \scoh f'(w)[A]$ and $f(v)=f'(v)$ for all $v< w$.  By definition of $\flag \ell$,  $(f(w), g(w))$ and $(f'(w), g'(w))$ are in $\ell$ which is linear, so we have $g(w) \coh  g'(w)[B]$. Consider $v<w$, we have $f(v)=a=f'(v)$ and since both $(a, g(v))$ and $(a, g'(v))$
are in $\ell$ which is linear we have $g(v) \coh g'(v)[B]$. 
Applying proposition \ref{diffegalsup}, there exists an $u$ such that $g(u) \neq g'(u)$ and $g(t)=g'(t)$ for all $t<u$. We necessarily have $u \leqslant w$ and therefore $g(u) \scoh  g'(u)[B]$. Hence $g \scoh g'[\flag  B]$. 
\item 
Assume $f=f'$.  For all $w$, both $(f(w), g(w))$ and $(f(w), g'(w))$ are in $\ell$ which
is linear. Therefore, for all $w$, one has $g(w) \coh g'(w)[B]$.  Applying proposition \ref{diffegalsup} either $g=g'$ (1) or there exists an $u$ such that $g(u) \neq g'(u)$ and $g(v)=g'(v)$ for all
$v<u$  (2). When (2) holds, we have $g(u) \scoh  g'(u)[B]$ since $g(w) \coh g'(w)[B]$ for all
$w$ and when (1) holds we have $g=g'$.  In both case we have $g \coh g'[\flag  B]$
\end{itemize} 

\noindent\textbf{It is clear that $\flag  Id_{A}=Id_{\flag A}$} 

\noindent\textbf{Let us finally show that $\flag$  commutes with linear composition.}  Let $\ell: A \multimap B$ and
$\ell': B \multimap C,$ be two linear morphisms. 
\begin{itemize} 
\item 
If $(f, h)$ is in $(\flag  \ell') \circ(\flag  \ell)$ then $(f, h)$ is in $\flag (\ell' \circ \ell)$. Indeed there exists a $g$ in $|\flag  B|$ such that $(f, g)$ is in $\flag  \ell$ and $(g, h)$ is in $\flag  \ell'$. Thus, for all $w$ the pair $(f(w), g(w))$ is in $\ell$ and the pair $(g(w), h(w))$ is in $\ell',$ so that $(f(w), h(w))$ is in $\ell' \circ \ell$ for all $w,$ i.e. $(f, h)$ is in $\flag (\ell' \circ \ell)$
\item 
We now assume that $(f, h)$ is in $\flag (\ell' \circ \ell)$ and show that it is in $(\flag  \ell') \circ(\flag  \ell)$ too.
If $(f, h)$ is in $\flag (\ell' \circ \ell)$ then $(f(w), h(w))$ is in $\ell' \circ \ell \text { for all } w, \text { i.e. for all } w$ there exists a non empty set of tokens $G(w)$ in $|B|$  
with at least one $g(w)\in G(w)$ such that $(f(w), g(w))$ is in $\ell$ and $((g(w), h(w)) \text { is in } \ell' .$ The  point is to show that one may choose, for each $w$ a token $g(w)$ in $G(w)$ in such a way that $g$ is a \emph{continuous function} from $\deux^\omega$ to $|B|$ that is a token in  $|\flag  B|$. Consider the function $(f, h)$ from $\deux^{\omega}$ to $|A| \times|C|$ with the discrete topology
on $|A| \times|C|$. It is continuous, because the product topology of a finite product of discrete spaces is the discrete topology on the (finite) product of the involved sets. Therefore it may be described as a finite binary tree, with leaves in $|A| \times|C| .$ We write it as a finite set $\left\{(m_{i},(a_{i}, c_{i}))\right\}$ with the properties of generic trees (\ref{existecouple}) and
(\ref{unicouple}) given in proposition \ref{repfctcont}. 

For all $i$,  there exists $w$ such that $f(w)=a_{i}$ and $h(w)=c_{i}$ take e.g. $w=m_{i}(0)^{\omega}$.  Therefore for all $i$ there exists $b_{i}$ in $|B|$ such that $(a_{i}, b_{i})$ is in $\ell$ and $(b_{i}, c_{i})$ is in $\ell'$,  and for each $i$ we choose one (there are finitely many $i$). Now, we can define $g(w)$ with the help of the property (\ref{existecouple}) of proposition \ref{repfctcont}.  For each $w$ there exists a unique $i$ such that $w=m_{i} w',$ and we define $g(w)$ to be $b_{i}$,  and thus $g$ is clearly a continuous function from $\deux^{\omega}$ to $|B|$. Notice that the generic tree of $g$ is not necessarily $\left\{(m_{i}, b_{i})\right\}$ but \emph{its normal form} according to \ref{termesnormaux} of  proposition \ref{repfctcont}. Indeed,  the property \ref{unicouple} may fail since there possibly exist $i, j$ and $m$ such that $m_{i}=m 0, m_{j}=m 1$ while $b_{i}=b_{j}$.  Now it is easily seen that $(f, g)$ is in $\flag  \ell$ and $(g, h)$ in $\flag  \ell'$. Indeed, for all $w$ there exists a unique $i$ such that $w=m_{i} w'$ and we then have $f(w)=a_{i}, g(w)=b_{i}$ $h(w)=c_{i}$ and thus $(f(w), g(w))=(a_{i}, b_{i})$ is in $\ell$ and $(g(w), h(w))=(b_{i}, c_{i})$
is in $\ell'$
\end{itemize} 
\end{proof} 

As intuition suggests, $\flag$ is neither a monad nor a comonad. Because $\flag$ is self-dual, $\flag$ is a monad if and only if it is a comonad. The following proof adapted  from \cite{Ret94caen} shows that there is no natural transformation from $\flag$ to the identity endofunctor, 
so $\flag$ is not a comonad. 

\begin{prop} 
The modality $\flag$ is neither a monad nor a comonad. 
\end{prop}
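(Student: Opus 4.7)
The plan is to exploit the hint in the paragraph preceding the statement: self-duality of $\flag$ means any monad unit $\eta\colon\id\Rightarrow\flag$ dualizes to a natural transformation $\flag\Rightarrow\id$, and a comonad counit already is one. So it suffices to show that no non-trivial natural transformation $\epsilon\colon\flag\Rightarrow\id$ exists --- the everywhere-empty $\epsilon$ is vacuously natural but fails the counit law $\epsilon_{\flag A}\circ\delta_A=\id_{\flag A}$ (and its monadic dual), so it cannot equip $\flag$ with (co)monad structure. My test space is $A=\unplusun$, the coherence space on $\{0,1\}$ with $0\sincoh 1$; its web $|\flag A|$ contains at least the constants $\underline 0,\underline 1$ and the ``splitter'' $f=\langle\underline 0\,\underline 1\rangle$ that sends $0w\mapsto 0$ and $1w\mapsto 1$. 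Suppose $\epsilon$ exists and write $e\subseteq |\flag A|\times|A|$ for the trace of $\epsilon_A$.

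First I would probe $\epsilon$ against the ``partial identities'' $\ell_i\colon A\to A$ of trace $\{(i,i)\}$ for $i\in\{0,1\}$. From the definition of $\flag\ell$ one reads off $\flag\ell_i=\{(\underline i,\underline i)\}$, so the naturality square $\epsilon_A\circ\flag\ell_i=\ell_i\circ\epsilon_A$ becomes the equality of $\{(\underline i,a):(\underline i,a)\in e\}$ with $\{(g,i):(g,i)\in e\}$. Every element of the first set has first coordinate $\underline i$, and to belong to the second it must have second coordinate $i$; symmetrically for the second set. Both sides therefore lie inside the singleton $\{(\underline i,i)\}$, which forces $e\cap(|\flag A|\times\{i\})\subseteq\{(\underline i,i)\}$: the only pair of $e$ with second coordinate $i$ is $(\underline i,i)$. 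In particular neither $(f,0)$ nor $(f,1)$ lies in $e$.

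Next I would probe $\epsilon$ against the ``collapse'' maps $c_i\colon A\to A$ of trace $\{(0,i),(1,i)\}$. A short computation gives $\flag c_i=\{(g,\underline i):g\in|\flag A|\}$, and the naturality square then reads that $\{(g,a):g\in|\flag A|,\,(\underline i,a)\in e\}$ equals $\{(g,i):\exists a'\,(g,a')\in e\}$. Since $\epsilon_A$ sends cliques to cliques and $\{0,1\}$ is not a clique of $A$, at most one $a$ can satisfy $(\underline i,a)\in e$; hence either nothing in $e$ has first coordinate $\underline i$, and the equality collapses to $e=\emptyset$, or there is a unique such $a$, forced to equal $i$, and then every $g\in|\flag A|$ admits some $a'$ with $(g,a')\in e$. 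Applied to $g=f$ this second alternative contradicts the previous paragraph, so $e=\emptyset$, which rules out the comonad structure and, by self-duality, the monad structure as well.

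I expect the only real effort to be the trace bookkeeping in the two naturality squares --- spelling out $\flag\ell_i$, $\flag c_i$ and the composites $\epsilon_A\circ\flag(-)$ and $(-)\circ\epsilon_A$. The conceptual content is clean: the swap symmetry of $A$ together with the one-sided observations $\ell_i$ and the collapsing maps $c_i$ overdetermine $\epsilon_A$ on the non-constant token $f$, leaving no possible value in $|A|=\{0,1\}$.
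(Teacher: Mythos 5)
Your argument is correct and follows essentially the same route as the paper: the same test space $\unplusun$, the same problematic token $\langle\underline 0\,\underline 1\rangle$, and the same two-stage naturality argument in which ``restriction'' maps pin the images of the constants and show the splitter has no image, while a ``collapsing'' map forces it to have one. The only substantive difference is cosmetic (you probe with endomorphisms of $\unplusun$ where the paper maps into $\un$), plus a welcome refinement: you explicitly dispose of the everywhere-empty natural transformation via the counit law, a point the paper treats more tersely.
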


\begin{proof} 
 If $\flag$ were a comonad, then there would exist a natural transformation $\rr$ 
 from the functor $\flag$ to the identity functor $\id$. 
 Let $\un$ be the coherence space whose web is $\{*\}$ 
 and let $\unplusun$ be the coherence space whose web is $\{a,b\}$ with coherence $a\sincoh b$; 
 let $\ell_{a}, \ell_{b}, \ell_{a b}: \unplusun  \multimap \un$ 
 be the three linear morphisms from $\un \oplus \un$ to $\un$ respectively defined by 
$\ell_{a}=\{(a, *)\}$,\enspace $\ell_{b}=\{(b, *)\}$ \enspace $\ell_{a b}=\{(a, *),(b, *)\}$. 

There should exists $\rr_{\unplusun}$ from $\flag(\unplusun)$ to $\unplusun$ and $\rr_\un$ from $\flag(\un)\equiv \un$ to $\un$ such that, for any $x$ 
among $a$, $b$, $ab$:
\begin{center} 
\comsq{x}:\quad $\rr_\un\circ\flag(\ell_x)=\ell_x\circ\rr_\unplusun$ 
\end{center} 

For $x=ab$ one should have $\rr_\un\circ\flag(\ell_{a b})=\ell_{a b}\circ\rr_\unplusun$ 
so it is mandatory that $\rr_{un}\neq \emptyset$  so  $\rr_\un=Id_\un$. Indeed $\rr_\un$ can only be $Id_\un$ or $\emptyset$, because there is no other linear map from $\un$ to itself. 

Now consider the three following elements in $\flag A$ (viewed as terms): 
\begin{verse} 
 $\underline{a}$ (constant function from 
$\deux^\omega$ to $a\in |\unplusun|$),  

$\underline{b}$ (constant function from 
$\deux^\omega$ to $b\in |\unplusun|$), 

$\langle\underline{a},\underline{b}\rangle$ (function from 
$\deux^\omega$ mapping all $0w$ to $a\in |\unplusun|$ and all $1w$ to $b\in |\unplusun|$). 
\end{verse} 

Observe that: 
\begin{verse} 
\comsq{a}  imposes that $\rr_\unplusun$ maps  $\underline{a}$ to $a\in |\unplusun|$, 

\comsq{b}  imposes that $\rr_\unplusun$ maps  $\underline{b}$ to $b\in |\unplusun|$, 
\end{verse} 
so $\langle \underline{a}, \underline{b}\rangle$ cannot be mapped by $\rr_\unplusun$,
hence \comsq{ab}  cannot hold, contradiction. 
\end{proof} 

To close the discussion on  $\flag$ being or not a (co)monad, it is not difficult to be convinced, see  \cite{Ret94caen} for an argument, that there also is no co-associative natural transformation from $\flag$ to $\flag\flag$  --- although it is probably tedious to prove thoroughly

\section{The Modality ``Flag" in the Category of Hypercoherences} 

The construction is very similar, we shall be brief, and closely refer to \cite{Ehr93}. Although we shall be brief, we think it is a good idea to recall what a hypercoherence is. The cardinal of a set $M$ is denoted by $\#M$. 

\begin{dfn}[Reminder on hypercoherences] 
A \emph{hypercoherence} $X$  is defined by its web $|X|$ (a set of tokens) and $\Gamma(X)\subset \powerset^*_{\rm fin}(X)$ i.e. a set of  finite non empty subsets of the web (atomic coherence), which includes all singletons --- strict atomic coherence $\Gamma^*(X)$ is $\Gamma(X)$ minus all singletons.\footnote{
As opposed to Girard's qualitative domains (which include coherence spaces which are the binary generated qualitative domains)\cite{Gir86}, a subpart of an atomic  coherence is not asked to be itself an atomic coherence.}  
The negation $X^\perp$ of an hypercoherence $X=(|X|,\Gamma(X))$ is $X^\perp=(|X|,\powerset^*_{\rm fin}(X)\setminus \Gamma^*(X))$. The tensor product $X \otimes Y$ of two hypercoherences  $X=(|X|,\Gamma(X))$ and  $Y=(|Y|,\Gamma(Y))$
is the hypercoherence $X\times Y=(|X| \times|Y|,  \Gamma(X\times Y))$ with  $w \in \Gamma(X \otimes Y)$ if and only if $w_{1} \in \Gamma(X)$ and $w_{2} \in \Gamma(Y)$ --- $w_1$ and $w_2$ are respectively the projections of $w$ on $|X|$ and on $|Y|$ beware that in general $w\neq (w_1 \times w_2)$.

Given two hypercoherences $X=(|X|,\Gamma(X))$ and  $Y=(|Y|,\Gamma(Y))$ the hypercoherence $X \rightspoon Y$ is defined by $X \rightspoon Y= (|X|\times |Y|, \Gamma(X\rightspoon Y))$ with 
$w\in \Gamma(X\rightspoon Y)$ if and only if  ($w$ is a finite non empty subset of $|X|\times |Y|$ and) the projections $w_1$ and $w_2$ of $w$ on respectively  $|X|$ and on $|Y|$,  satisfy the following implications: 
\begin{itemize} 
\item 
if $w_1\in \Gamma(X)$ then  $w_2\in \Gamma(Y)$ 
\item 
if $w_1\in \Gamma(X)$ and $\#w_1\geq 2$ then $w_2\in \Gamma(Y)$ and $\#w_2\geq 2$ 
\end{itemize} 

A hypercoherent subset of $|X\rightspoon Y|$ can be viewed as a function from $X$ to $Y$.
\footnote{It is too complicated to be presented here.} 
\end{dfn} 

In order to define our flag modality in the category of hypercoherences,  we first have to define the hypercoherence $X \bef Y$ corresponding to the ``before" connective from hypercoherences $|X|$ and $|Y|$: 

\begin{dfn} Let $X$ and $Y$ be hypercoherences. The hypercoherence $X\bef Y$ is the hypercoherence whose web is $|X| \times|Y|$ and whose strict coherence is defined by:
\begin{center} 
$w \in \Gamma^{*}(X)$  \emph{iff}   
$\Big\{ \pi_{1}(w) \in \Gamma^{*}(X)   \mathrm{\ or\ }  \big[ \# \pi_{1}(w)=1 \mathrm{\ and\ } \pi_{2}(w) \in \Gamma^{*}(Y) \big] \Big\}$ 
\end{center} 
\end{dfn} 
It is easily seen that this connective is associative, self-dual, non-commutative and in between the tensor product and the par $-$ just like in the category of coherence spaces.

Now, the self-dual modality enjoying the wanted properties is defined in the category of hypercoherences by:

\begin{dfn}  Let $X$ be an hypercoherence. The hypercoherence $\flag  X$ is the hypercoherence whose web is, as for coherence spaces:
$
|\flag  X|=\gt_{|X|}$
and whose strict atomic coherence is defined by:
$$
\left\{f_{1}, \ldots, f_{k}\right\} \in \Gamma^{*}(X) . \text { iff } . \exists m \in \deux^{\omega}\left\{\begin{array}{l}
\left\{f_{1}(m), \ldots, f_{k}(m)\right\} \in \Gamma^{*}(X) \\
\quad \text { and } \\
\forall m'<m\quad  \#\left\{f_{1}(m'), \ldots, f_{k}(m')\right\}=1
\end{array}\right.
$$
\end{dfn} 

The proofs that the hypercoherence version of $\flag$  enjoys the same properties as the coherence 
version studied in the previous section are the same \textsl{mutatis mutandis}.

\begin{prop} In the category of hypercoherences and linear maps, the modality flag as defined above, enjoys the following properties: 
\begin{itemize}
\item flag is self-dual $(\flag A)^\perp=\flag (A^\perp)$ 
\item $A$ is a retract of $(\flag A)$ 
\item $(\flag A)$  is isomorphic to  $(\flag A)\bef (\flag A)$
\end{itemize}  
\end{prop}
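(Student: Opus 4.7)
The plan is to lift each of the three coherence-space arguments to the hypercoherent setting by replacing the binary ``diff and min'' analysis with a $k$-ary one. The first step is a generalisation of Proposition \ref{diffegalsup}: given $k\geq 2$ distinct functions $f_1,\ldots,f_k$ in $\gt_{|A|}$, there is a minimum $w\in\deux^\omega$ such that $\#\{f_1(w),\ldots,f_k(w)\}\geq 2$ while $\#\{f_1(w'),\ldots,f_k(w')\}=1$ for every $w'<w$. Its proof mirrors Proposition \ref{diffegalsup}: the map $w\mapsto (f_1(w),\ldots,f_k(w))$ is continuous from $\deux^\omega$ to the discrete product $|A|^k$, the preimage of the diagonal is clopen, hence so is its complement, which is non-empty by distinctness of the $f_i$ and therefore has a minimum by Proposition \ref{minimum}.

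Self-duality follows immediately: at this critical $w$ the set $\{f_i(w)\}_i$ has cardinality at least two, so by the definition of hypercoherent negation exactly one of $\{f_i(w)\}_i\in\Gamma^*(A)$ or $\{f_i(w)\}_i\in\Gamma^*(A^\perp)$ holds, which amounts to exactly one of $\{f_i\}_i\in\Gamma^*(\flag A)$ or $\{f_i\}_i\in\Gamma^*(\flag(A^\perp))$. For the retract, I would reuse the traces $\{(a,\underline{a})\,|\,a\in|A|\}$ and $\{(\underline{a},a)\,|\,a\in|A|\}$: any family of distinct constant functions $\{\underline{a_1},\ldots,\underline{a_k}\}$ admits the minimum $0^\omega$ of $\deux^\omega$ (which has no predecessors, so the universal condition on $w'<0^\omega$ is vacuous) as critical word with value-set $\{a_1,\ldots,a_k\}$, so the hypercoherent linearity checks reduce directly to those for $A$.

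For the contraction isomorphism I would use the same set $\mathcal{C}=\{(h,(h_0,h_1))\,|\,h(0w)=h_0(w),\ h(1w)=h_1(w)\}$ and show that a finite family of pairs $(h^{(i)},(h_0^{(i)},h_1^{(i)}))$ in $\mathcal{C}$ has its left projection in $\Gamma^*(\flag A)$ iff its right projection is in $\Gamma^*(\flag A\bef\flag A)$. The critical word $w$ for $\{h^{(i)}\}_i$ splits the argument: if $w=0w'$ then $w'$ is critical for $\{h_0^{(i)}\}_i$ with the same value-set, so hypercoherence of the left factor follows; if $w=1w'$ then all $h_0^{(i)}$ coincide (they agree on every $0v$ since $0v<1w'=w$), forcing $\#\{h_0^{(i)}\}_i=1$, while $w'$ is critical for $\{h_1^{(i)}\}_i$ with the same value-set. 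These are exactly the two disjuncts defining $\Gamma^*(\flag A\bef\flag A)$, and the converse direction runs analogously by locating the critical word on the right-projection side.

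The main obstacle is the careful bookkeeping of the cardinality-one side conditions peculiar to hypercoherences: whereas in the coherence space case one only compared two functions at $w'<w$, one must now verify that all $k$ values $f_i(w')$ coincide at every $w'<w$, and track the fact that projections of a $k$-family in $\mathcal{C}$ may collapse in cardinality (several $h_0^{(i)}$ becoming equal). Once the $k$-ary generalisation of Proposition \ref{diffegalsup} is in place, the rest is routine \textsl{mutatis mutandis}, as the author announces.
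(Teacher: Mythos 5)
Your proposal is correct and follows exactly the route the paper intends: the paper itself gives no proof here beyond the remark that the coherence-space arguments carry over \textsl{mutatis mutandis}, and your write-up supplies precisely the missing details, correctly isolating the one genuinely new ingredient --- the $k$-ary generalisation of Proposition \ref{diffegalsup} giving a unique critical word $w$ where $\#\{f_1(w),\ldots,f_k(w)\}\geq 2$ first occurs --- after which each of the three properties reduces to the binary case as you describe.
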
 

\section{Hints towards a Syntax for the ``Flag" Modality}

One can view the conclusion of pomset proof as a series parallel partial order of formulas (a partially ordered multiset of formulas, hence the name \emph{pomset logic}). 
If the order 
is viewed as a set of temporal constraints $A\bef B$ meaning the resource $A$  should be consumed before the resource $B$ is.  A modal formula $\flag A$ can be understood as repeatedly $A$, or $A$ at any instant --- while $!A$ rather means as many $A$ as you wish at the instant where it is located. 

A part from proof-nets, deductive  systems for pomset logic are quite difficult to present, and since we do not yet have a satisfying syntax for the modality flag, we think it is wiser not to impose to the reader a technical section that does not yet  yield a precise result. We nevertheless want to informally discuss what rules for the flag modality may look like. 

The easiest proof system for pomset logic are proof-nets (cf. subsection \ref{cohIntPN}). Notice that so far the only inductive definition of those proof-nets is the very complicated sequent calculus recently provided by Slavnov \cite{Slavnov2019}, where sequents $\seq A_1,\ldots,A_n$ are endowed with a family of binary relations $R_k$ (with $1\leq k\leq n$) between pairs of tuples of $k$ conclusions $(A_{i_1},\ldots,A_{i_k})$ without common elements among 
$A_{i_1},\ldots,A_{i_k}$ --- the relation $R_k$should stable when one applies the same permutation of the $k$ indices to the two tuples of lenght $k$  --- corresponding to disjoint paths between $k$ conclusions in a proof-net. 

The calculus of structures SBV \cite{GugStr01} is a subsystem of Pomset Logic \cite{Retore2021Lambek}  where the connective before is called ``seq(ential)" and SBV is  a strict subsystem of pomset logic, according to some recent  work by Nguyen and Stra{\ss}burger, yet unpublished \cite{NguyenStrass2021}. This calculus consists in rewriting formulas (as terms) up to associativity of all the three connectives $\pa,\bef,\ts$ and the commutativity of the connective $\pa,\ts$, starting with the axiom $\ts_i (a_i\pa a_i^\perp)$ --- some $a_i$ may denote the same propositional variables. All the tautologies of SBV correspond to pomset proof-nets. 

The contraction morphism for flag $(\flag A \bef \flag A) \rightspoon \flag A$ is easily implemented with proof-nets.  The initial two conclusions $\flag A$ and $\flag A$  are at the same place in the order (at the same instant), and  a ``before" link between those two conclusions as premises yields $\flag A$ instead of $\flag A \bef \flag A$.  The correctness criterion applies. This is easily implemented in SBV: $T[(\flag A \bef \flag A)]$  rewrites to $T[\flag A]$.  This quite reminiscent of the rule for contraction of $?A$ in linear logic, with $\pa$ being replaced with $\bef$.

The duplication is easy to  import in a  system like SBV:  $\flag A\rightspoon (\flag A \bef \flag A) $ (duplication) is interpreted as a term rewrite rule. Thus a term $T[(\flag A \bef \flag A)]$  rewrites to $T[\flag A]$ and as usual in SBV this rewriting may take place within a formula. However handsome proof-nets do not handle multiple conclusions that way and consequently, without any additional structure, the rule cannot be modelled that way. What we learnt  from $?$ and $!$ is that such a rule is likely to be unnecessary, because $\flag A$ is the dual of $\flag A$, so reducing a cut between two contraction suggests what the rule may look like. 

The introduction yielding  $\flag A$ from $A$ is much more problematic. For  $!A$ the context ought to be duplicable i.e. to be $?\Gamma$. Here the context ought to be replicable in time hence of the shape $\flag \Gamma$, because the dual of  $\flag X$ is $\flag X$. This would require to consider $\flag$ boxes; boxes are not fully satisfactory, but may work.  When conclusions are  $\flag \Gamma$ and $A$ one could think that a box with conclusions $\flag \Gamma, \flag A$ could be introduced, but it is not that simple: an axiom $a,\lnot a$ cannot be turned into $\flag a, \flag a^\perp$, because as opposed to $!$ whose dual is $?$ that can be freely introduced by the dereliction rule, the dual of $\flag$
 is the very same modality $\flag$.  A possible solution is to make a box and to simultaneously introduce $\flag$ on all the conclusions of the box. 

An important guideline is to study how the rules that we want to design would preserve the cut-elimination property. A cut between $\flag A$ and $\flag A^\perp$ should reduce to cut(s) between $A$ and $A^\perp$. When both $\flag$ result from a box (which can be the same bix, because of self-duality) ,  or when they both come from a contraction (as in Gentzen MIX rule also known as cross-cuts), we do not yet know how those cuts should be reduced. 
 
The proposal in \cite{gug2017lyon} motivated by a computational analysis of cut elimination  rather use rules like $((\flag A)\pa A)\rightspoon  \flag A$. Such rules can be viewed as more specific, or particular cases of the ones we just discussed  because  $(A\bef B)\rightspoon (A\pa B)$.

So we think it is too early to say more than the above remarks about the possible syntax of $\flag$. 

\section{Conclusion} 

A challenging issue is to define a deductive system enjoying cut elimination including a syntactical 
match of 
the self dual modality presented in this paper, that is a calculus  
for non-commutative  contraction and duplication.  This logical calculus could be defined as an extension of the calculus of structures with deep inference (roughly speaking, internal rewriting)
\cite{Gug99,GugStr01,Gug2007,GugStrass2011tcl}
or with pomset proof-nets with or without links \cite{Ret97tlca,Ret98roma,Retore2021Lambek}, 
or with one of the sequent calculi introduced for pomset logic \cite{Ret93,Ret97tlca,Retore2021Lambek} not to forget the complex proposal by Sergey Slavnov 
\cite{Slavnov2019} which is complete with respect to  pomset proof-nets. 
The ongoing syntactic work by Alessio Guglielmi with a self dual modality \cite{gug2017lyon} looks quite appealing. 

Rules may emerge from  our intuition in terms of models of computation, a standard applicative domain for linear logic and related formalisms,.  that has been well developed for the calculus of structure and deep inference.

We think that this ancient work may give guidelines for defining such a deductive system as coherence semantics often did for linear logic.   Such a logic would be a multiplicative exponential non-commutative  linear logic close to classical logic, as MELL is to intuitionistic logic.

\paragraph*{Acknowledgements:}  Thanks to Pierre Ageron, Thomas Ehrhard, Catherine Gourion,  Thomas Streicher and especially to Jean-Yves Girard, Achim Jung  and Myriam Quatrini for their constructive comments. Thanks to the reviewers of both the conference paper and the proceedings, their suggestions greatly contributed to improve the conference version as well as the proceedings version. 

\bibliography{entcsLinearityRetore} 
\bibliographystyle{eptcs} 

\end{document}